\newcommand{\proves}[1]{\vdash_{{#1}}}
\newcommand{\lnote}[1]{}
\newcommand{\rnote}[1]{}
\newcommand{\FR}[2]{\mathrm{FR}^{#1}_{#2}}
\newcommand{\minvc}{\mathsf{Min}\text{-}\mathsf{VC}}
\newcommand{\maxis}{\mathsf{Max}\text{-}\mathsf{IS}}
\begin{document}

\title{Hypercontractive inequalities via SOS, and the Frankl--R\"odl graph} 

\author{
Manuel Kauers\thanks{Institute for Algebra, Johannes Kepler Universit\"at. Supported by FWF grant Y464-N18.} \\
\texttt{manuel.kauers@jku.de}
\and
Ryan O'Donnell\thanks{Department of Computer Science, Carnegie Mellon University. Supported by NSF grants CCF-0747250 and CCF-1116594, a Sloan fellowship, and a grant from the MSR--CMU Center for Computational Thinking.}  \\
\texttt{odonnell@cs.cmu.edu}
\and
Li-Yang Tan\thanks{Department of Computer Science, Columbia  University. Research done while visiting CMU.} \\
\texttt{liyang@cs.columbia.edu}
\and
Yuan Zhou$^\dagger$\thanks{Also supported by a grant from the Simons Foundation (Award Number 252545).} \\
\texttt{yuanzhou@cs.cmu.edu}
}
\maketitle

\abstract{
    Our main result is a formulation and proof of the reverse hypercontractive inequality in the sum-of-squares (SOS) proof system.  As a consequence we show that for any constant $0 < \gamma \leq 1/4$, the SOS/Lasserre SDP hierarchy at degree $4\lceil \frac{1}{4\gamma}\rceil$ certifies the statement ``the maximum independent set in the Frankl--R\"odl graph $\FR{n}{\gamma}$ has fractional size~$o(1)$''.  Here $\FR{n}{\gamma} = (V,E)$ is the graph with $V = \{0,1\}^n$ and $(x,y) \in E$ whenever $\hamdist(x,y) = (1-\gamma)n$ (an even integer).  In particular, we show the degree-$4$ SOS algorithm certifies the chromatic number lower bound ``$\chi(\FR{n}{1/4}) = \omega(1)$'', even though $\FR{n}{1/4}$ is the canonical integrality gap instance for which standard SDP relaxations cannot even certify ``$\chi(\FR{n}{1/4}) > 3$''.
    Finally, we also give an SOS proof of (a generalization of) the sharp $(2,q)$-hypercontractive inequality for any even integer~$q$.
}

\setcounter{page}{0}
\thispagestyle{empty}
\newpage

\section{Introduction}

Hypercontractive inequalities play an important role in analysis of Boolean functions. They are concerned with the \emph{noise operator}~$T_\rho$ which acts on functions $f \btR$ via $T_\rho f(x) = \Ex[f(\by)]$, where $\by$ is a ``$\rho$-correlated copy'' of~$x$.  Equivalently, $T_\rho f = \sum_{S \subseteq [n]} \rho^{|S|} \wh{f}(S) \chi_S$, where the numbers $\wh{f}(S)$ are the Fourier coefficients of~$f$.  The standard hypercontractivity inequality was first proved by Bonami~\cite{Bon70} and the reverse hypercontractivity inequality was first proved by Borell~\cite{Bor82}.  We state both, recalling the notation $\|f\|_p = \Ex_{\bx \sim \bn}[|f(\bx)|^p]^{1/p}$.

\begin{named}{Hypercontractive Inequality} Let $f \btR$, let $1 \leq p \leq q \leq \infty$, and let $0 \leq \rho \leq \sqrt{(p-1)/(q-1)}$.  Then $\|T_\rho f\|_q \leq \|f\|_p$.
\end{named}

\begin{named}{Reverse Hypercontractive Inequality} Let $f \btR^{\geq 0}$, let $-\infty \leq q \leq p \leq 1$, and let $0 \leq \rho \leq \sqrt{(1-p)/(1-q)}$. Then $\|T_\rho f\|_q \geq \|f\|_p$.
\end{named}

The hypercontractive inequality is almost always used with either $p = 2$ or $q = 2$.  The $(2,4)$-hypercontractivity inequality --- i.e., the case $q = 4$, $p = 2$, $\rho = 1/\sqrt{3}$ --- is a particularly useful case, as is the following easy corollary:
\begin{theorem}                                     \label{thm:2-4-hypercon}
    For $k \in \N$, let $\calP^{\leq k}$ be the projection operator which maps $f \btR$ to its low-degree part $\calP^{\leq k} f = \sum_{|S| \leq k} \wh{f}(S) \chi_S$.  Then the $2\rightarrow 4$ operator norm of $\calP^{\leq k}$ is at most $3^{k/2}$.  I.e., $\|\calP^{\leq k} f\|_4 \leq 3^{k/2} \|f\|_2$.
\end{theorem}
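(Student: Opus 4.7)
The plan is to deduce this directly from the $(2,4)$-hypercontractive inequality (the $p=2$, $q=4$, $\rho = 1/\sqrt{3}$ instance) by a change of variables that converts the low-degree projection into the noise operator.

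First, I would replace $f$ by $g := \calP^{\leq k} f$. Since $\calP^{\leq k}$ is an orthogonal projection in $L^2$, we have $\|g\|_2 \leq \|f\|_2$, so it suffices to show $\|g\|_4 \leq 3^{k/2}\|g\|_2$ for any degree-at-most-$k$ function $g$.

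Next, I would define $h \btR$ by ``inverting the noise operator on the low-degree spectrum'': set $\wh{h}(S) = 3^{|S|/2} \wh{g}(S)$ for $|S| \leq k$ and $\wh{h}(S) = 0$ otherwise. With $\rho = 1/\sqrt{3}$, this is rigged precisely so that $T_\rho h = g$. Using Parseval and the fact that $g$ has degree at most $k$, I can estimate
\[
\|h\|_2^2 \;=\; \sum_{|S| \leq k} 3^{|S|} \wh{g}(S)^2 \;\leq\; 3^k \sum_{S} \wh{g}(S)^2 \;=\; 3^k \|g\|_2^2.
\]
Now applying the standard $(2,4)$-hypercontractive inequality to $h$ gives
\[
\|g\|_4 \;=\; \|T_\rho h\|_4 \;\leq\; \|h\|_2 \;\leq\; 3^{k/2} \|g\|_2 \;\leq\; 3^{k/2}\|f\|_2,
\]
which is exactly the claim.

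There isn't really a hard step here; the only ``move'' is recognizing that $\calP^{\leq k}$ and $T_\rho$ are diagonal in the Fourier basis with eigenvalues $\mathbf{1}[|S|\leq k]$ and $\rho^{|S|}$ respectively, so that $T_\rho$ with $\rho = 1/\sqrt{3}$ dominates $3^{-k/2}\calP^{\leq k}$ entrywise on the low-degree subspace. Everything then reduces to a one-line application of the hypercontractive inequality stated above.
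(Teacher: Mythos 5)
Your proof is correct, and it matches what the paper intends: the paper states Theorem~\ref{thm:2-4-hypercon} as ``an easy corollary'' of the $(2,4)$-hypercontractive inequality and gives no further argument, and your derivation (pick $h$ with $T_{1/\sqrt{3}}h = \calP^{\leq k}f$, bound $\|h\|_2 \leq 3^{k/2}\|\calP^{\leq k}f\|_2$, apply hypercontractivity) is exactly that corollary spelled out.
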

Theorem~\ref{thm:2-4-hypercon} is known to have a proof which is noticeably simpler than that of the general hypercontractive inequality~\cite{MOO05}.  Theorem~\ref{thm:2-4-hypercon} can be used to prove, e.g., the KKL~Theorem~\cite{KKL88}, the sharp small-set expansion statement for the $1/3$-noisy hypercube, and the Invariance Principle of~\cite{MOO10}.  More generally, the hypercontractivity inequality has the following corollary:
\begin{theorem}                                     \label{thm:2-q-hypercon}
    For any $q \geq 2$ and $f \btR$ we have $\|\calP^{\leq k} f\|_q \leq (q-1)^{k/2} \|f\|_2$.
\end{theorem}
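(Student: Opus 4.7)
The plan is to deduce Theorem~\ref{thm:2-q-hypercon} from the stated Hypercontractive Inequality by the standard ``rewind the noise operator'' trick that works because $\calP^{\leq k} f$ is degree-bounded.

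First, set $g = \calP^{\leq k} f$ and $\rho = 1/\sqrt{q-1}$, noting that with $p = 2$ this is exactly the equality case of the hypothesis $\rho \leq \sqrt{(p-1)/(q-1)}$ in the Hypercontractive Inequality. The Inequality in its given form bounds $\|T_\rho g\|_q$ in terms of $\|g\|_2$, which is the wrong direction for our purposes, since we want to bound $\|g\|_q$ itself. To remedy this, I would construct a function $h$ with $T_\rho h = g$, i.e.\ define $h = \sum_{|S|\leq k} \rho^{-|S|}\wh{g}(S)\,\chi_S$. This is well-defined precisely because $g$ has degree at most $k$, so the factor $\rho^{-|S|}$ never blows up beyond $\rho^{-k}$.

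Second, I would apply Parseval to $h$ to get
\[
\|h\|_2^2 \;=\; \sum_{|S|\leq k} \rho^{-2|S|}\wh{g}(S)^2 \;\leq\; \rho^{-2k}\|g\|_2^2 \;\leq\; \rho^{-2k}\|f\|_2^2 \;=\; (q-1)^{k}\|f\|_2^2,
\]
where the last $\leq$ uses the contractivity of the projection $\calP^{\leq k}$ in the $2$-norm. Then I would invoke the Hypercontractive Inequality on~$h$ (with $p=2$) to conclude
\[
\|g\|_q \;=\; \|T_\rho h\|_q \;\leq\; \|h\|_2 \;\leq\; (q-1)^{k/2}\,\|f\|_2,
\]
which is the desired bound.

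There is no real obstacle here; the only subtle point is ensuring that the ``inverse noise'' $T_{1/\rho}$ can be applied, and this is precisely what the low-degree hypothesis $|S|\leq k$ buys us. Note that Theorem~\ref{thm:2-4-hypercon} is simply the $q=4$ specialization, so this argument simultaneously recovers it.
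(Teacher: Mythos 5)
Your argument is correct, and it is precisely the standard deduction the paper has in mind when it labels Theorem~\ref{thm:2-q-hypercon} a corollary of the Hypercontractive Inequality (the paper does not spell out the proof). The ``invert the noise operator on the low-degree part'' device, the Parseval bound $\|h\|_2 \leq \rho^{-k}\|g\|_2$, and the contractivity of $\calP^{\leq k}$ in the $2$-norm are exactly the intended steps, so there is nothing to compare beyond noting that you have filled in the details the paper leaves implicit.
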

\noindent This corollary is often used to control the behavior of low-degree polynomials of random bits.

Reverse hypercontractivity is perhaps most often used to show that if $A, B \subseteq \bn$ are large sets and $(\bx, \by)$ is a $\rho$-correlated pair of random strings then there is a substantial chance that $\bx \in A$ and $\by \in B$.  This was first deduced in~\cite{MOR+06} by deriving the following consequence of reverse hypercontractivity: 
\begin{theorem}                                     \label{thm:morss}
    Let $f, g \btR^{\geq 0}$, let $0 \leq q \leq 1$, and let $0 \leq \rho \leq 1-q
    $.  Then $\E[f(\bx) g(\by)] \geq \|f\|_q \|g\|_q$ when $(\bx,\by)$ is a pair of $\rho$-correlated random strings.
\end{theorem}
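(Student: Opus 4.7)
The plan is to derive Theorem~\ref{thm:morss} as a short consequence of the Reverse Hypercontractive Inequality combined with the reverse Hölder inequality. First, I would rewrite the left-hand side using the fact that $T_\rho g(x) = \E_{\by}[g(\by)\mid \bx=x]$, so that
\[
    \E[f(\bx) g(\by)] = \E[f(\bx)\, T_\rho g(\bx)] = \langle f, T_\rho g\rangle.
\]
Since $f$ and $T_\rho g$ are both nonnegative, I can apply the reverse Hölder inequality with dual exponents $q$ and $q' := q/(q-1)$ (note $q' \le 0$, which is the regime in which reverse Hölder applies for nonnegative functions), giving
\[
    \E[f(\bx)\, T_\rho g(\bx)] \;\ge\; \|f\|_q \, \|T_\rho g\|_{q'}.
\]

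Next I would invoke the Reverse Hypercontractive Inequality applied to~$g$ with parameter pair $(p,q)\to(q,q')$. The one thing to check is that the noise level $\rho$ is within range: the condition $\rho \le \sqrt{(1-q)/(1-q')}$ simplifies using $1-q' = 1-q/(q-1) = 1/(1-q)$, so $(1-q)/(1-q') = (1-q)^2$, and the bound becomes exactly $\rho \le 1-q$, which is the hypothesis. Thus $\|T_\rho g\|_{q'} \ge \|g\|_q$. Chaining the two estimates gives
\[
    \E[f(\bx) g(\by)] \;\ge\; \|f\|_q \, \|T_\rho g\|_{q'} \;\ge\; \|f\|_q\, \|g\|_q,
\]
which is the claimed inequality.

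The only genuinely delicate step is the reverse Hölder application, since one of the exponents is negative (or zero in the limit $q\to 0$, where $\|\cdot\|_0$ is interpreted as the geometric mean). As long as $f$ and $T_\rho g$ are \emph{strictly} positive the reverse Hölder inequality holds in the usual form; for general nonnegative $f,g$ one argues by approximating $f$ and $g$ by $f+\eps$ and $g+\eps$ and letting $\eps\to 0^+$ (the right-hand side $\|f+\eps\|_q\|g+\eps\|_q$ converges to $\|f\|_q\|g\|_q$ and the left-hand side converges to $\E[f g]$ by bounded convergence on the finite cube). Once that approximation issue is handled, the rest is the short algebraic identity $1-q/(q-1) = 1/(1-q)$ that makes the noise parameter work out exactly.
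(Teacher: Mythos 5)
Your derivation is correct, and it is the standard argument: pass to $\langle f, T_\rho g\rangle$, apply reverse H\"older with the dual exponent $q' = q/(q-1) \leq 0$, and then invoke the Reverse Hypercontractive Inequality to get $\|T_\rho g\|_{q'} \geq \|g\|_q$, with the noise bound $\sqrt{(1-q)/(1-q')} = 1-q$ falling out of the algebra exactly as you say. The paper itself does not prove Theorem~\ref{thm:morss}; it simply states it as a consequence of reverse hypercontractivity and cites \cite{MOR+06}, where essentially this same reverse-H\"older-plus-RHI argument appears. So there is no divergence from the paper's approach to report. One small point worth being explicit about, since you flag it: the approximation step needs not only that $\|f+\eps\|_q\|g+\eps\|_q \to \|f\|_q\|g\|_q$, but also that the reverse H\"older and RHI steps are applied to $f+\eps$ and $T_\rho(g+\eps) = T_\rho g + \eps$, which are strictly positive, so both inequalities are legitimately invoked before taking the limit; with that noted the argument is airtight (and for $\rho < 1$ one can alternatively observe that $T_\rho g$ is already strictly positive on the cube unless $g \equiv 0$).
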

The reverse hypercontractive inequality has been used, e.g., in problems related to approximability and hardness of approximation~\cite{FKO07,She09a,BHM12}, and problems in quantitative social choice~\cite{MOO10,Mos12a,MOS12b,Kel12,MR12}.

\ignore{

One of the widely used hypercontractive inequalities, the $(2,4)$-hypercontractive inequality, states
\begin{theorem}\label{thm:2-4-hypercon}
Let $\mathcal{P}_d$ be the linear operator that maps a function $f : \{\pm 1\}^n \to \R$ of the form $f = \sum_{S \subseteq [n]} \hat{f}(S) \chi_S$ to its low-degree part $f' = \sum_{|S| \leq d} \hat{f}(S) \chi_S$ (where $\chi_S(x) = \prod_{i \in S} x_i$), then the $2\to4$ operator norm of $\mathcal{P}_d$ is at most $3^{d/2}$. In other words, the operator $\mathcal{P}_d$ satisfies
\begin{align}
 \|\mathcal{P}_d f\|_4 \leq 3^{d/2} \|f\|_2, \qquad \forall f : \{\pm 1\}^{n} \to \R . \label{eq:2-4-hypercon}
\end{align}
\end{theorem}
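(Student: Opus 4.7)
The plan is to bootstrap directly from the full hypercontractive inequality applied with $p = 2$, the given $q$, and $\rho = \sqrt{(p-1)/(q-1)} = 1/\sqrt{q-1}$, which yields the clean form $\|T_\rho h\|_q \leq \|h\|_2$ for every $h : \{-1,1\}^n \to \R$. The idea is to exhibit an $h$ whose noisy version $T_\rho h$ is precisely the low-degree projection $\calP^{\leq k} f$, and to control $\|h\|_2$ in terms of $\|f\|_2$.

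To construct such an $h$, I would invert $T_\rho$ on the degree-$\leq k$ subspace. Since $T_\rho$ acts diagonally in the Fourier basis by multiplication by $\rho^{|S|}$, the natural choice is
\[
h \;=\; \sum_{|S| \leq k} \rho^{-|S|} \wh{f}(S)\, \chi_S,
\]
which gives $T_\rho h = \sum_{|S| \leq k} \wh{f}(S) \chi_S = \calP^{\leq k} f$ by direct computation. Applying hypercontractivity to this $h$ then yields $\|\calP^{\leq k} f\|_q \leq \|h\|_2$.

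The only remaining step is to bound $\|h\|_2$. By Parseval, $\|h\|_2^2 = \sum_{|S| \leq k} \rho^{-2|S|} \wh{f}(S)^2$. Since $|S| \leq k$ and $0 < \rho \leq 1$, the factor $\rho^{-2|S|}$ is at most $\rho^{-2k} = (q-1)^k$, so $\|h\|_2 \leq (q-1)^{k/2}\, \bigl(\sum_S \wh{f}(S)^2\bigr)^{1/2} = (q-1)^{k/2} \|f\|_2$ by Parseval applied to $f$. Chaining the two inequalities yields the claimed bound.

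There is no real obstacle here; the whole argument is a short Fourier-side manipulation once one spots the \emph{invert-then-apply-hypercontractivity} trick. The only subtlety worth flagging is that the inversion factor $\rho^{-|S|}$ must be bounded uniformly on the support, which is exactly what the degree restriction $|S| \leq k$ enforces, and this is also the reason why the degree parameter $k$ ends up in the exponent of $(q-1)$.
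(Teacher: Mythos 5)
Your deduction is correct, and it is the standard argument the paper alludes to by presenting this theorem as an ``easy corollary'' of the $(2,4)$-hypercontractive inequality: take $h=\sum_{|S|\le k}\rho^{-|S|}\wh f(S)\chi_S$ so that $T_\rho h=\calP^{\le k}f$, apply $\|T_\rho h\|_4\le\|h\|_2$, and then bound $\|h\|_2\le\rho^{-k}\|f\|_2=3^{k/2}\|f\|_2$ by Parseval together with $\rho^{-|S|}\le\rho^{-k}$ on the support. The paper does not spell this deduction out (it instead references the more self-contained inductive proof of~\cite{MOO10}), but your corollary argument is sound and is exactly what the paper's framing has in mind.
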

The $(2,4)$-hypercontractive inequality lies in the heart of the proof of the KKL~theorem~\cite{KKL88} and is used in the soundness analysis of the construction of integrality gap instances for Unique Games \cite{KV05, RS09b, KPS10, BGH+12}, Balanced Separator \cite{DKSV06}. Some other applications (e.g. \cite{KKMO07, MOO10, OW09}) need the more general Bonami-Beckner inequality \cite{Bon70, Bec75}, which is stated as follows.
\begin{theorem}\label{thm:bonami-beckner}
Let $f : \{\pm 1\}^n \to \R$ and $q \geq p \geq 1$. Then
\begin{align}
\|T_{\rho} f\|_q \leq \|f \|_p, \qquad \forall \rho : 0 \leq \rho \leq (p-1)^{1/2} / (q-1)^{1/2} , \label{eq:bonami-beckner}
\end{align}
\end{theorem}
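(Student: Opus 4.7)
The plan is to deduce Theorem~\ref{thm:morss} directly from the Reverse Hypercontractive Inequality together with the reverse H\"older inequality for nonnegative functions. The first step is to rewrite the correlated expectation as an inner product against a noised version of $g$: since $(\bx,\by)$ is $\rho$-correlated,
\[
\E[f(\bx)\, g(\by)] \;=\; \E_{\bx}\bigl[f(\bx)\cdot T_\rho g(\bx)\bigr] \;=\; \langle f,\, T_\rho g\rangle,
\]
so it suffices to lower-bound this inner product by $\|f\|_q \|g\|_q$.

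The next step is to apply reverse H\"older with the conjugate exponents $q$ and $q' := q/(q-1)$ (which satisfy $1/q + 1/q' = 1$). Since $0 < q < 1$, we have $q' \le 0$, and the reverse H\"older inequality for positive functions states that $\langle f,\, T_\rho g\rangle \ge \|f\|_q \cdot \|T_\rho g\|_{q'}$. This is where the direction of the inequality flips relative to ordinary H\"older, and it is exactly what we need. Strict positivity of $T_\rho g$ (for any nontrivial nonnegative $g$ and $\rho \in (0,1)$) makes the $L^{q'}$-norm well defined; the degenerate endpoints $q \in \{0,1\}$ can be recovered by continuity (with the usual convention $\|f\|_0 = \exp(\E \log f)$).

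Finally, I would invoke the Reverse Hypercontractive Inequality with input exponent $p := q$ and output exponent $q'$, yielding $\|T_\rho g\|_{q'} \ge \|g\|_q$ provided $\rho \le \sqrt{(1-p)/(1-q')}$. The algebraic identity $1-q' = 1 - q/(q-1) = 1/(1-q)$ gives
\[
\sqrt{\frac{1-p}{1-q'}} \;=\; \sqrt{\frac{1-q}{1/(1-q)}} \;=\; 1-q,
\]
which matches the theorem's hypothesis $\rho \le 1-q$ on the nose. Chaining the two bounds produces $\E[f(\bx) g(\by)] \ge \|f\|_q \cdot \|T_\rho g\|_{q'} \ge \|f\|_q \|g\|_q$, as desired.

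There is no serious obstacle: the argument is a two-line reduction, and the only real computation is verifying that the H\"older exponent $q' = q/(q-1)$ produces precisely the cutoff $\rho \le 1-q$ appearing in reverse hypercontractivity. The minor technical issues (strict positivity needed to make sense of $L^{q'}$ for $q' \le 0$, and the boundary cases $q = 0, 1$) are handled by the standard perturbation $g \rightsquigarrow g + \eps$ followed by $\eps \to 0$.
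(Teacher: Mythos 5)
The statement you were asked to prove is Theorem~\ref{thm:bonami-beckner}, the \emph{forward} hypercontractive (Bonami--Beckner) inequality: for $f:\{\pm 1\}^n\to\R$, $1\le p\le q$, and $0\le\rho\le\sqrt{(p-1)/(q-1)}$, one has $\|T_\rho f\|_q\le\|f\|_p$. Your proposal opens with ``The plan is to deduce Theorem~\ref{thm:morss} directly from the Reverse Hypercontractive Inequality together with the reverse H\"older inequality,'' and indeed the entire argument is the standard MOR$^{+}$06-style derivation of the \emph{two-function reverse hypercontractive} lower bound $\E[f(\bx)g(\by)]\ge\|f\|_q\|g\|_q$ for nonnegative $f,g$ and exponents $q\in[0,1]$. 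That is a different theorem, in a different regime (sub-$1$ exponents, nonnegative functions, a \emph{lower} bound on a correlated bilinear form), with a different hypothesis on $\rho$ ($\rho\le 1-q$, not $\rho\le\sqrt{(p-1)/(q-1)}$), and a different conclusion. Nothing in the proposal touches the case $q\ge p\ge 1$ or the upper bound $\|T_\rho f\|_q\le\|f\|_p$.

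To be clear, the reduction you did write out is itself sound for Theorem~\ref{thm:morss}: $\E[f(\bx)g(\by)]=\langle f,T_\rho g\rangle\ge\|f\|_q\|T_\rho g\|_{q'}$ by reverse H\"older with $q'=q/(q-1)\le 0$, and then reverse hypercontractivity with input exponent $q$ and output exponent $q'$ gives $\|T_\rho g\|_{q'}\ge\|g\|_q$ precisely when $\rho\le\sqrt{(1-q)/(1-q')}=1-q$. But that is not a proof of Theorem~\ref{thm:bonami-beckner}, nor can it be adapted to one by renaming variables: the forward hypercontractive inequality lives on the opposite side of $p=q=1$, goes in the opposite direction, and requires its own argument (e.g.\ the two-point inequality plus tensorization, as in Bonami's original proof, or the inductive approach the paper uses in Section~\ref{sec:hypercon} for even integer $q$). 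You would need to start over with an argument aimed at the stated inequality rather than at Theorem~\ref{thm:morss}.
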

Here $T_{\rho}$ is the well-known ``noise-operator'' from analysis of Boolean functions.

Another way to generalize the $(2, 4)$-hypercontractive inequality is the $(2, 2s)$-hypercontractive inequality for integers $s \geq 1$, which upper bounds the $2\to 2s$ operator norms of $\mathcal{P}_d$ defined in Theorem~\ref{thm:2-4-hypercon}. Indeed, we can state the theorem as follows, relaxing the condition that underlying distribution is a bit.
\begin{theorem}\label{thm:2-2s-hypercon}
Assume $\bx_1, \dots, \bx_n$ are independent real random variables with mean~$0$, second moment~$1$, and higher moments satisfying $\E[\bx_i^{2k+1}] = 0$ and $\E[\bx_i^{2k}] \leq (2k-1)!!$ for all integers $k > 1$.  (E.g., Rademacher, standard Gaussian, and uniform on $[-\sqrt{3}, \sqrt{3}]$ random variables are all acceptable.)

Let $f=\sum_{|S| \leq d} \hat{f}(S) \chi_S$ (where $\chi_S(x) = \prod_{i \in S} x_i$) be an $n$-variate polynomial of degree at most $d$ and write $\bF = f(\bx_1, \dots, \bx_n)$.  Then for all integers $s \geq 1$,
\[
\|\bF\|_{2s} \leq (2s-1)^{d/2} \|\bF\|_2.
\]
\end{theorem}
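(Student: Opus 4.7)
The plan is to reduce the theorem to the unrestricted $(2,2s)$-hypercontractive inequality $\|T_\rho F\|_{2s}\leq\|F\|_2$ (for all polynomials $F$ and $\rho=1/\sqrt{2s-1}$), and then prove the latter by the classical Bonami-style induction on $n$. Under the moment hypotheses the system $\{\chi_S\}$ is orthonormal, so a degree-$\leq d$ polynomial $f=\sum_{|S|\leq d}\hat f(S)\chi_S$ is the $T_\rho$-image of $g=\sum_{|S|\leq d}\rho^{-|S|}\hat f(S)\chi_S$, giving
\[
\|f\|_{2s}=\|T_\rho g\|_{2s}\leq\|g\|_2\leq\rho^{-d}\|f\|_2=(2s-1)^{d/2}\|f\|_2.
\]

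The base case of the induction is the two-point inequality: for any single random variable $\bx$ satisfying the hypotheses and any $a,b\in\R$,
\[
\E[(a+\rho b\bx)^{2s}]\leq(a^2+b^2)^s.
\]
Expanding both sides via the binomial theorem, using the vanishing of odd moments and the bound $\E[\bx^{2j}]\leq(2j-1)!!$ on the LHS, and matching against $(a^2+b^2)^s=\sum_j\binom{s}{j}a^{2s-2j}b^{2j}$ on the RHS, the inequality reduces to the termwise comparison $\binom{2s}{2j}(2j-1)!!\,\rho^{2j}\leq\binom{s}{j}$ for each $0\leq j\leq s$. After a short double-factorial simplification this becomes $(2s-1)(2s-3)\cdots(2s-2j+1)\leq(2s-1)^j$, which is immediate since each of the $j$ factors on the left is at most $2s-1$.

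For the inductive step, decompose $F(x_1,\ldots,x_n)=F_0(x_2,\ldots,x_n)+x_1 F_1(x_2,\ldots,x_n)$; then $T_\rho F=G_0+\rho x_1 G_1$, where $G_i$ denotes the $(n-1)$-variable noise-operator image of $F_i$. Integrating over $\bx_1$ first and applying the two-point inequality,
\[
\|T_\rho F\|_{2s}^{2s}\leq\E[(G_0^2+G_1^2)^s]=\|G_0^2+G_1^2\|_s^s.
\]
Minkowski's inequality (valid because $s\geq 1$) bounds the right side by $(\|G_0\|_{2s}^2+\|G_1\|_{2s}^2)^s$; induction gives $\|G_i\|_{2s}\leq\|F_i\|_2$; and orthonormality of $\{\chi_S\}$ yields $\|F\|_2^2=\|F_0\|_2^2+\|F_1\|_2^2$. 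Chaining these closes the induction. The heart of the argument is the two-point inequality, where the moment hypotheses on the $\bx_i$ are used in full; the tensorization step is then essentially mechanical, modulo the reliance on $s\geq 1$ for Minkowski.
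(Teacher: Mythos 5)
Your proof is correct and takes essentially the same route as the paper's Section~3 argument for the noise-operator form $\|T_\rho f\|_{2s}\le\|f\|_2$: tensorize to $n=1$ and verify the termwise binomial comparison $\binom{2s}{2j}\rho^{2j}\E[\bx^{2j}]\le\binom{s}{j}$. The differences are presentational. You spell out the tensorization (the Minkowski step on $\|G_0^2+G_1^2\|_s$ plus induction) where the paper simply cites it as standard; you explicitly derive the degree-$d$ projection corollary by writing $f=T_\rho g$; and you work directly from the Gaussian bound $\E[\bx^{2j}]\le(2j-1)!!$, reducing the comparison to $(2s-1)(2s-3)\cdots(2s-2j+1)\le(2s-1)^j$, whereas the paper's Section~3 theorem is stated under the (slightly weaker, sharp) $s$-Moment Conditions and gets the termwise comparison immediately. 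Your inequality is exactly the verification that the $(2j-1)!!$ bound implies the $s$-Moment Conditions, which matches the remark the paper makes after defining them. One small note on the statement itself: as written it only forces odd moments of order $\ge 5$ to vanish, so $\E[\bx_i^3]$ is formally unconstrained; your reading that all odd moments vanish is clearly the intent (the listed examples are symmetric) and is needed for the binomial expansion to contain only even-power terms.
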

Both Theorem~\ref{thm:2-4-hypercon} and Theorem~\ref{thm:2-2s-hypercon} follow from the Bonami-Beckner inequality, but the standard proof of the Bonami-Beckner inequality is a bit technical. A very easy inductive proof of Theorem~\ref{thm:2-4-hypercon} was published in \cite{MOO10}. In Appendix~\ref{app:2-2s-hypercon-proof}, we present a similar ``easy'' proof of Theorem~\ref{thm:2-2s-hypercon}, modeled on the \cite{MOO10} proof.
}

\subsection{Sum-of-squares proofs of hypercontractive inequalities}

The present work is concerned with proving hypercontractive inequalities via ``sums of squares'' (SOS); i.e., in the Positivstellensatz proof system introduced by Grigoriev and Vorobjov~\cite{GV01}.  A recent work of Barak et al.~\cite{BBH+12} showed that the Khot--Vishnoi~\cite{KV05} SDP integrality gap instances for Unique-Games are actually well-solved by the ``$4$-round Lasserre SDP hierarchy''; equivalently, the ``degree-$8$ SOS hierarchy''.  This is despite the fact that they are strong gap instances for superconstantly many rounds of other weaker SDP hierarchies such as Lov\'{a}sz--Schrijver$^+$ and Sherali--Adams$^+$~\cite{RS09b,KS09}.  The key to analyzing the optimum value of the Khot--Vishnoi instances is the hypercontractive inequality, and perhaps the key technical component of the Barak et al.\ result is showing that Theorem~\ref{thm:2-4-hypercon} has a degree-$4$ ``SOS proof''.  That is, if we treat each $f(x)$ as a formal ``indeterminate'', then $9^{k} \|f\|_2^4 - \|\calP^{\leq k} f\|_4^4$ is a degree-$4$ polynomial in the $2^n$ indeterminates, and Barak et al.\ showed that it is a sum of squared polynomials (hence always nonnegative).

The connection between SOS proofs and SDP relaxations for optimization problems was made independently by Lasserre~\cite{Las00} and Parrilo~\cite{Par00}.  Roughly speaking, if a system of $n$-variate polynomial inequalities can be refuted within the degree-$d$ SOS proof system of Grigoriev and Vorobjov~\cite{GV01}, then this refutation can also be found efficiently by solving a semidefinite program of size $n^{O(d)}$.  The associated ``degree-$d$ SOS hierarchy'' for approximating optimization problems is known to be at least as strong as the Lov\'{a}sz--Schrijver$^+$ and Sherali--Adams$^+$ SDP hierarchies, and the~\cite{BBH+12} result shows that it can be noticeably stronger for the notorious Unique Games problem. (For more details, see e.g.~\cite{OZ13,BBH+12}.)

Later,~\cite{OZ13} showed that the degree-$4$ SOS hierarchy correctly analyzes the value of the~\cite{DKSV06} instances of Balanced-Separator, which are known to be superconstant-factor integrality gap instances for superconstantly many rounds of the ``LH SDP hierarchy''~\cite{RS09b}. It was also shown in~\cite{OZ13} that the degree-$O(1)$ SOS hierarchy certifies the value of the~\cite{KV05} instances of Max-Cut to within factor~$.952$, whereas superconstantly many rounds of the Sherali--Adams$^+$ hierarchy are still off by a factor of~$.878$~\cite{RS09b,KS09}.  (The $.952$ here was very recently improved to any $1-\eps$.~\cite{DMN13})  The key to the former result was an SOS proof of the KKL~Theorem (relying on \cite{BBH+12}'s SOS proof of Theorem~\ref{thm:2-4-hypercon}); the key to the latter was an SOS proof of an Invariance Principle variant, which in turn needed an SOS proof of higher-norm hypercontractivity, Theorem~\ref{thm:2-q-hypercon}.  The work~\cite{OZ13} was unable to actually obtain Theorem~\ref{thm:2-q-hypercon} with an SOS proof, but instead obtained a weaker version which sufficed for their purposes.

Still, the full power of the SOS hierarchy is far from well-understood.  Analyzing what can and cannot be proved with low-degree SOS proofs is evidently very important; for example, it's  consistent with our current knowledge that the degree-$4$ SOS hierarchy refutes the Unique-Games Conjecture, gives a $1.01$-approximation for Uniform Sparsest-Cut,
a $1.4$-approximation for Vertex-Cover, and certifies that any graph with chromatic number exceeding~$5$ is not $3$-colorable.

In particular, hypercontractive inequalities have played a key role in many of the sophisticated SDP integrality gap instances.  Thus it is natural to ask: Can a sharp version of the hypercontractive inequality be proved in the SOS proof system?  Can any version of the reverse hypercontractive inequality be proved?  As we will see, the latter question is particularly relevant for the known SDP integrality instances of the $3$-Coloring and Vertex-Cover problems.

\ignore{
Given the lack of understanding about the SOS/Lasserre SDP, it is worthwhile to explore what we can prove in the SOS proof system. A concrete question at this point is: can we prove the $(2, 2s)$-hypercontractive inequality, or even the full Benami-Beckner inequality in the SOS proof system? Unfortunately, for most values of $p$, the $p$-``norms'' in the inequality \eqref{eq:bonami-beckner} are not polynomials in the values $f(x)$ even after we raise it to the $p$-th power. Therefore, we need first to identify a valid statement in the SOS proof system.

Since the $(2, 4)$-hypercontractive inequality plays an central role in many integrality gap constructions, the SOS proof of the inequality has several computational implications. Combining with other SOS proof techniques, \cite{BBH+12} showed that the level-$4$ SOS/Lasserre SDP algorithm refutes all the known integrality gap instances for Unique Games (for many other strong SDP relaxations) \cite{KV05, RS09b, KPS10, BGH+12}. Later, \cite{OZ13} showed that level-$2$ SOS/Lasserre SDP refutes the Sherali--Adams$+$SDP gap instance for Balanced Separator \cite{DKSV06}, where the key is to extend the low-degree SOS proof of the $(2, 4)$-hypercontractive inequality to a low-degree proof of the KKL theorem.

The SOS/Lasserre SDP, proposed independently by Lasserre~\cite{Las00} and Parrilo~\cite{Par00}, is the most powerful SDP relaxation hierarchy studied in the literature (compared to Lov\'{a}sz--Schrijver hierarchy \cite{LS91a}, Sherali--Adams LP hierarchy \cite{SA90}, and mixed hierarchies combining LP relaxation hierarchies with SDP, such as Lov\'{a}sz--Schrijver$^+$, Sherali--Adams$^+$). However, the full power of the SOS/Lasserre SDP is not well understood -- it is consistent with our current knowledge that level-2 SOS/Lasserre SDP could refute the Unique Games Conjecture.

Given the lack of understanding about the SOS/Lasserre SDP, it is worthwhile to explore what we can prove in the SOS proof system. A concrete question at this point is: can we prove the $(2, 2s)$-hypercontractive inequality, or even the full Benami-Beckner inequality in the SOS proof system? Unfortunately, for most values of $p$, the $p$-``norms'' in the inequality \eqref{eq:bonami-beckner} are not polynomials in the values $f(x)$ even after we raise it to the $p$-th power. Therefore, we need first to identify a valid statement in the SOS proof system.
}

\subsection{Our results}

The main result in this paper is an SOS proof of the reverse hypercontractivity Theorem~\ref{thm:morss} for all~$q$ equal to the reciprocal of an even integer.  As one application of this, we show that just the degree-$4$ algorithm from the SOS hierarchy can certify that the ``Frankl--R\"odl'' SDP integrality gap instances for $3$-Coloring have chromatic number~$\omega(1)$.
Finally, we also give an SOS proof of the sharp $(2,q)$-hypercontractive inequality for all even integers~$q$; in fact, a version with relaxed moment conditions.  We find it interesting to see that the two powerful hypercontractive inequalities admit proofs as ``elementary'' as sum-of-squares proofs. On the other hand, to obtain these proofs we had to use somewhat elaborate methods, including computer algebra techniques.

\paragraph{The hypercontractive inequality for even integer norms.}  As mentioned, Barak et al.~\cite{BBH+12} gave an SOS proof of Theorem~\ref{thm:2-4-hypercon}, that $\|\calP^{\leq k} f\|_4^4 \leq 9^k\|f\|_2^4$.  Although there is a very easy proof of this theorem ``in ZFC''~\cite{MOO05}, that proof uses the Cauchy--Schwarz inequality, whose square-roots do not obviously translate into SOS statements.  The SOS proof in~\cite{BBH+12} gets around this by proving the \emph{generalized} statement $\E[(\calP^{\leq k} f)^2 (\calP^{\leq k'} g)^2] \leq 3^{k+k'} \E[f^2]\E[g^2]$, allowing them to replace Cauchy--Schwarz with $XY \leq \half X^2 + \half Y^2$.  In~\cite{OZ13} this SOS proof was very slightly generalized to cover the $(2,4)$-hypercontractive inequality, $\E[(T_{\rho} f)^2(T_{\rho} g)^2] \leq \E[f^2]\E[g^2]$ for $\rho = 1/\sqrt{3}$.  That work also gave an SOS proof of a weakened version of Theorem~\ref{thm:2-q-hypercon} for all even integers~$q$, namely $\|\calP^{\leq k} f\|_q^q \leq q^{O(qk/2)} \|f\|_2^{q}$.  (Attention is restricted to even integers~$q$ because the $(2,q)$-hypercontractive inequality cannot even be stated as a polynomial inequality otherwise.)

In Section~\ref{sec:hypercon} we prove the full
$(2,q)$-hypercontractive inequality for all even integers~$q$.  Our
strategy is as follows. First, we give a simple proof
  (``in ZFC'') of $(2,q)$-hypercontractivity for all even integers
  $q$; our proof works not just for random~$\pm 1$ bits but for any
  random variables satisfying fairly liberal moment bounds.  Indeed,
  we are not aware of any previous work showing that such moment
  bounds are sufficient for hypercontractivity.  However this proof
  relies on the well-known fact that the hypercontractivity inequality
  tensorizes \cite{KS88}, which in turn uses the triangle inequality
  for the $(q/2)$-norm, an inequality that cannot even be stated in SOS.
  For our SOS extension of this result we move to a $(q/2)$-function
  version of the statement as in \cite{BBH+12}; this requires some 
  more work. 
   Our final theorem is as follows:
\begin{theorem} \label{thm:informal-hypercon} (Informal.) Let $s \in \N^+$ and write $q = 2s$.  Let $0 \leq \rho \leq \frac{1}{\sqrt{q-1}}$.  Let $\bx = (\bx_1, \dots, \bx_n)$ be a sequence of independent real random variables, with each $\bx_i$ satisfying
    \[
        \E[\bx_i^{2j-1}] = 0, \quad \E[\bx_i^{2j}] \leq (2s-1)^j \tfrac{\tbinom{s}{j}}{\tbinom{2s}{2j}}\quad \text{for all integers $1 \leq j \leq s$};
    \]
    further assume that $\E[\bx_i^2] = 1$ for each $i$. (Rademachers and standard Gaussians qualify.)
    Then for functions $f_1, \dots, f_s : \R^n \to \R$ there is an SOS proof of
    \[
        \E\left[\littleprod_{i=1}^s (T_\rho f_i(\bx))^2\right] \leq \prod_{i=1}^s \E[f_i(\bx)^2].
    \]
    As corollaries we have SOS proofs of $\|T_\rho f\|_q^q \leq \|f\|_2^{q}$ and $\|\calP^{\leq k} f\|_q^q \leq (q-1)^{qk/2}\|f\|_2^{q}$.\rnote{these things follow as in OZ13; not sure we need to bother writing it}
\end{theorem}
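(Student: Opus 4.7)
The plan is to mirror the structure of the classical hypercontractivity proof: first prove a single-variable inequality, then tensorize over $n$ by induction. As a warm-up I would first write down a ``ZFC'' proof of $(2,q)$-hypercontractivity under the stated moment bounds; this identifies exactly which moment inequalities drive the single-variable step and pins down the polynomial inequality we must prove SOS. The ZFC tensorization itself goes through Minkowski's inequality in $L^{q/2}$, which is not a polynomial identity, so to get an SOS proof I would follow the device of \cite{BBH+12} and work throughout with the $s$-function form $\E[\prod_{i=1}^s (T_\rho f_i)^2] \leq \prod_{i=1}^s \E[f_i^2]$. This form tensorizes through a purely algebraic expansion, avoiding any norm triangle inequality.

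The \emph{single-variable case} ($n=1$) is the core algebraic content. Each $f_i$ is a linear function $\alpha_i + \beta_i \bx$, so $T_\rho f_i = \alpha_i + \rho \beta_i \bx$ and $\E[f_i(\bx)^2] = \alpha_i^2 + \beta_i^2$. What must be exhibited as a sum of squares in the indeterminates $\alpha_i,\beta_i$ is
\[
    \prod_{i=1}^s (\alpha_i^2 + \beta_i^2) \;-\; \E\!\left[\prod_{i=1}^s (\alpha_i + \rho \beta_i \bx)^2\right].
\]
Expanding the expectation produces monomials indexed by $(c_1,\ldots,c_s) \in \{0,1,2\}^s$ with coefficient $\prod_i \binom{2}{c_i}\, \rho^{\sum c_i}\, \E[\bx^{\sum c_i}]$, and only even-total configurations survive. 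The ``pure'' configurations (all $c_i \in \{0,2\}$) are directly bounded by the matching terms of $\prod_i (\alpha_i^2 + \beta_i^2)$, since the moment bound $\E[\bx^{2j}] \leq (2s-1)^j \binom{s}{j}/\binom{2s}{2j}$ combined with $\rho^2 = 1/(2s-1)$ gives $\rho^{2j}\,\E[\bx^{2j}] \leq \binom{s}{j}/\binom{2s}{2j} \leq 1$, leaving a controlled slack. My plan is to use that slack to absorb the ``mixed'' configurations (some $c_i=1$) by pairing off coordinates via squares $(\alpha_i \beta_j - \alpha_j \beta_i)^2$, which convert a product $\alpha_i \alpha_j \beta_i \beta_j$ into $\tfrac{1}{2}(\alpha_i^2 \beta_j^2 + \alpha_j^2 \beta_i^2)$ plus a square. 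The coefficient accounting that results is a binomial sum identity, and I would verify it with a computer algebra system.

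The \emph{inductive step} on $n$ then proceeds in two SOS-preserving moves. Decompose $f_i = f_i^{(0)}(\bx_{<n}) + \bx_n f_i^{(1)}(\bx_{<n})$ and set $F_i^{(k)} = T'_\rho f_i^{(k)}$, where $T'_\rho$ is the noise operator on the first $n-1$ coordinates; then $T_\rho f_i = F_i^{(0)} + \rho \bx_n F_i^{(1)}$. First, apply the single-variable inequality pointwise in $\bx_{<n}$ with $\alpha_i \leftarrow F_i^{(0)}$ and $\beta_i \leftarrow F_i^{(1)}$ to obtain
\[
    \E_{\bx_n}\!\left[\prod_i (F_i^{(0)} + \rho \bx_n F_i^{(1)})^2\right] \;\leq\; \prod_i \!\left((F_i^{(0)})^2 + (F_i^{(1)})^2\right),
\]
the substitution of polynomials for indeterminates preserving the SOS certificate. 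Second, expand $\prod_i ((F_i^{(0)})^2 + (F_i^{(1)})^2) = \sum_{\epsilon \in \{0,1\}^s} \prod_i (F_i^{(\epsilon_i)})^2$ and invoke the inductive hypothesis on each of the $2^s$ summands, applied to the tuple $(f_1^{(\epsilon_1)}, \ldots, f_s^{(\epsilon_s)})$; summing the bounds collapses back to $\prod_i (\|f_i^{(0)}\|_2^2 + \|f_i^{(1)}\|_2^2) = \prod_i \|f_i\|_2^2$. This is exactly the tensorization step that Minkowski handles in the ZFC proof, but in the multi-function formulation it becomes a finite nonnegative combination of applications of the inductive hypothesis.

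The principal obstacle is the single-variable case. The certificate must precisely balance the mixed $\alpha_i \beta_i$ contributions against the slack the moment bound leaves in the pure terms, and this balance is exactly why the moment constants $(2s-1)^j \binom{s}{j}/\binom{2s}{2j}$ take the form they do. I expect the combinatorial identity that makes the balance work is the ``binomial sum identity'' flagged in the introduction as needing computer algebra. Once the single-variable SOS is in hand, the induction over $n$ and the single-function corollaries $\|T_\rho f\|_q^q \leq \|f\|_2^q$ and $\|\calP^{\leq k} f\|_q^q \leq (q-1)^{qk/2}\|f\|_2^q$ are routine.
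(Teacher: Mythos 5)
Your proposal is correct in overall strategy and uses the same key ingredients as the paper's proof: reduction to the $s$-function formulation to avoid Minkowski, an induction on $n$, an AM-GM step to dispatch the cross terms, the $s$-moment conditions, and a binomial-sum coefficient identity verified by computer algebra (the paper uses Zeilberger's algorithm). The route is organized differently, though. The paper does not isolate an $n=1$ base case and then tensorize; it runs a direct induction whose base case is the trivial $n=0$ case and whose inductive step simultaneously performs the single-coordinate expansion (over partitions $(U,V,W)$ of $[s]$), applies the AM-GM lemma, invokes the moment bounds and the $\rho$-bound, applies the induction hypothesis, and finally checks the binomial identity — all inside one step. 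Your factoring into ``single-variable SOS certificate'' plus ``tensorization by applying it pointwise and then the IH to the $2^s$ summands of $\prod_i\bigl((F_i^{(0)})^2 + (F_i^{(1)})^2\bigr)$'' is a valid and somewhat more modular reorganization of the same argument; the binomial identity that you defer to CAS would then live entirely inside your $n=1$ certificate rather than in the general inductive step.

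One caution on your single-variable sketch: the mixed configurations produce cross terms $\prod_{i\in V}\alpha_i\beta_i$ for sets $V$ of even size that can exceed $2$, and the squares $(\alpha_i\beta_j - \alpha_j\beta_i)^2$ only directly handle the pairwise case $|V|=2$. For $|V|>2$ the paper uses an AM-GM lemma (Lemma~\ref{lem:hypercon-lem}) that writes $\prod_{i\in V}G_iH_i$ as an average over all balanced bipartitions $T\subseteq V$ of the products $\bigl(\prod_{i\in T}G_i\prod_{i\notin T}H_i\bigr)\bigl(\prod_{i\notin T}G_i\prod_{i\in T}H_i\bigr)$, and then applies $XY\le\tfrac12X^2+\tfrac12Y^2$ to each; this treats all even $|V|$ uniformly and produces precisely the $\tbinom{v}{v/2}^{-1}$-weighted contributions that make the final coefficient identity equal $1$. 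Your local pairing idea would need to be iterated or replaced by something along these lines for $|V|>2$; also, the moment bound gives equality (not slack) after the AM-GM redistribution, so the balance must be exact rather than absorbed by spare room in the pure terms.
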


\paragraph{The reverse hypercontractive inequality.}  Giving an SOS proof of this theorem proved to be significantly more difficult; it is our main result and the source of our application to $3$-Coloring and Vertex-Cover integrality gaps.  The theorem cannot even be stated in the SOS proof system directly since the $p$-``norms'' are not polynomials in the values~$f(x)$ when $p < 1$.  We turn to the $2$-function version from~\cite{MOR+06}, Theorem~\ref{thm:morss}; if $q = \frac{1}{2k}$ for some $k \in \N^+$ and if we replace $f$ and $g$ by $f^{2k}$ and $g^{2k}$ then we get a polynomial statement (and we can even drop the hypothesis that~$f$ and~$g$ are nonnegative).  The resulting theorem is:
\begin{theorem}                                     \label{thm:our-reverse-informal}
    Let $k \in \N^+$ and let $0 \leq \rho \leq 1-\frac{1}{2k}$.  Then for functions $f, g \btR$ there is a degree $4k$ SOS proof of
    \[
        \E_{\substack{(\bx,\by) \\ \text{$\rho$-corr'd}}} [f(\bx)^{2k}g(\by)^{2k}] \geq \E[f]^{2k}\E[g]^{2k}.
    \]
\end{theorem}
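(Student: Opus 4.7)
The natural strategy is to tensorize by induction on $n$, isolating the difficulty in the $n=1$ base case. To avoid appealing to norm-based Minkowski-type arguments that cannot be stated in SOS, I would use the following direct two-step hybrid, which preserves SOS form automatically. Write $\bx = (\bx', \bx_n)$, $\by = (\by', \by_n)$, set $f_{x_n}(x') := f(x', x_n)$ and $g_{y_n}(y') := g(y', y_n)$, and observe
\[
\E_{(\bx,\by)}\bigl[f(\bx)^{2k} g(\by)^{2k}\bigr] = \sum_{x_n,y_n \in \{\pm 1\}} \tfrac{1+\rho x_n y_n}{4}\,\E_{(\bx',\by')}\bigl[f_{x_n}(\bx')^{2k} g_{y_n}(\by')^{2k}\bigr].
\]
Applying the $(n{-}1)$-variable inductive SOS to each of the four inner expectations and weighting by the nonnegative coefficients $(1 + \rho x_n y_n)/4$ yields an SOS certificate (in the values of $f$ and $g$) for the relaxed inequality $\E[f^{2k} g^{2k}] \geq \E_{(\bx_n, \by_n)}\bigl[\bar f(\bx_n)^{2k} \bar g(\by_n)^{2k}\bigr]$, where $\bar f(x_n) := \E_{\bx'}[f(\bx', x_n)]$ and $\bar g(y_n) := \E_{\by'}[g(\by', y_n)]$ are marginal one-variable functions. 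One further invocation of the $n=1$ case, now applied to $\bar f$ and $\bar g$, produces an SOS in the four values $\bar f(\pm 1), \bar g(\pm 1)$; since these are linear combinations of the original values of $f, g$, substitution converts it to an SOS in the original variables, and summing with the first certificate closes the induction.

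All difficulty is thereby pushed into the $n=1$ base case, a polynomial inequality in four real variables $a := f(+1), b := f(-1), c := g(+1), d := g(-1)$. I would reparameterize by $u := (a+b)/2$, $v := (a-b)/2$, $s := (c+d)/2$, $t := (c-d)/2$ and decompose $(x \pm y)^{2k} = A(x, y) \pm B(x, y)$ into its parts even and odd in $y$, with $A(x,y) = \sum_{j=0}^{k} \binom{2k}{2j} x^{2k-2j} y^{2j}$ and $B(x, y) = \sum_{j=0}^{k-1} \binom{2k}{2j+1} x^{2k-2j-1} y^{2j+1}$. A short computation then converts the base case to
\[
A(u, v)\, A(s, t) + \rho\, B(u, v)\, B(s, t) \;\geq\; u^{2k} s^{2k}.
\]
Since the left side is linear in $\rho$, by convex combination it suffices to verify the endpoints $\rho = 0$ and $\rho = 1 - \tfrac{1}{2k}$. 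The $\rho = 0$ endpoint is straightforward: $A(u, v) - u^{2k}$ is a sum of nonnegative monomials (even powers of $v$, with binomial coefficients), and the telescoping identity $AA' - u^{2k} s^{2k} = (A - u^{2k})(A' - s^{2k}) + s^{2k}(A - u^{2k}) + u^{2k}(A' - s^{2k})$ (combined with $u^{2k} = (u^k)^2$) expresses it as a sum of products of SOS factors.

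The essential obstacle is the extremal endpoint $\rho = 1 - \tfrac{1}{2k}$, where the possibly-negative term $\rho B(u, v) B(s, t)$ (every monomial odd in both $v$ and $t$) must be absorbed exactly by the slack in $A(u, v) A(s, t) - u^{2k} s^{2k}$. For $k = 1$ one can write the difference as $(ut + 2\rho vs)^2 + (1 - 4\rho^2) v^2 s^2 + v^2 t^2$, manifestly SOS for $\rho \leq \tfrac{1}{2}$, but I do not expect so clean a closed form for general $k$. The hard part of the proof is to exhibit a uniform family of SOS decompositions of this degree-$4k$, four-variable polynomial indexed by $k$, and I anticipate needing computer algebra both to conjecture the shape of the squares from small-$k$ SOS programs and to reduce the general case to a binomial-coefficient identity that can be discharged by a symbolic summation routine.
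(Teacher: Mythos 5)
Your induction on $n$ is exactly the paper's: the decomposition $\E[f^{2k}g^{2k}] = \sum_{x_n,y_n}\tfrac{1+\rho x_n y_n}{4}\E[f_{x_n}^{2k}g_{y_n}^{2k}]$, followed by applying the inductive hypothesis to each of the four terms and then the $n=1$ base case to the resulting averaged functions, is precisely how Theorem~\ref{thm:our-SOS-reverse} is derived from Theorem~\ref{thm:reverse-hypercon-base}. Your reduction of the base case to checking the two endpoints $\rho=0$ and $\rho=\rho^*=1-\tfrac{1}{2k}$ by linearity in $\rho$ is also the paper's, and your $A/B$ (even/odd) reformulation $A(u,v)A(s,t)+\rho B(u,v)B(s,t)\ge u^{2k}s^{2k}$ is a clean and correct repackaging of it; your handling of $\rho=0$ is likewise fine (the paper instead invokes Lemma~\ref{lem:super-CS}, i.e.\ convexity of $t\mapsto t^k$, but this is cosmetic).

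The genuine gap is the $\rho=\rho^*$ endpoint, which you correctly identify as the crux but do not resolve, and your proposed route — conjecture explicit square decompositions from small-$k$ SDP solves and then verify a binomial identity — is not what turns out to work, nor does the paper ever exhibit explicit squares at all. The idea you are missing is two-fold. First, by bihomogeneity (degree $2k$ in $(u,v)$ and in $(s,t)$ separately) you can dehomogenize to two variables $a,b$, giving the Two-Point Inequality $P_k(a,b)\ge 0$; the dehomogenize/rehomogenize step is safe because the SOS summands for a polynomial of individual degree $\le 2k$ in $a$ have $\deg_a\le k$. Second — and this is the decisive move — perform the substitution $s=a+b$, $t=ab$, and use $r^2=s^2-4t$ to write $P_k$ as $\sum_{i=0}^k Q_{k,i}(t)\,s^{2i}$ where each $Q_{k,i}$ is a univariate polynomial in $t$. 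Since $s^{2i}$ is a square and a nonnegative univariate polynomial is automatically SOS (Fact~\ref{fact:univariate-sos}), it suffices to show $Q_{k,i}(t)\ge 0$ for all real $t$ — a real-analysis problem, not an SOS-search problem. This is where computer algebra enters, but in a different role than you anticipate: Zeilberger's algorithm produces a two-term recurrence in $k$ for the relevant sum, which is then solved in closed form as $\cos(4k\arctan\sqrt t)$, after which nonnegativity of $Q_{k,0}$ is an elementary estimate; nonnegativity of $Q_{k,i}$ for $i\ge1$ follows from a two-variable recurrence. Note also that this decomposition is tuned to the specific value $\rho^*=1-\tfrac{1}{2k}$: for general $\rho<\rho^*$ some $Q_{k,i}$ are genuinely negative (already $Q_{1,0}(t)=t^2-(2-4\rho)t$), so the strategy cannot succeed directly for the endpoint-free range, which is exactly why the linearity-in-$\rho$ reduction is not just a convenience but a necessity.
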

We prove this result in Section~\ref{sec:reverse-hypercon}.  An induction on~$n$ easily reduces the problem to the $n = 1$ case; for each~$k$, this is an inequality in four real indeterminates.  Then by homogeneity we can further reduce to an inequality in just two indeterminates.  Nevertheless, giving an SOS-proof of this ``two-point inequality'' for all~$k$ seems to be surprisingly tricky.  As an example of the problem we need to solve (the $k = 3$ case), the reader is invited to try the following puzzle:

``Show that
\[
{\tfrac {11}{24}} \left( 1+a \right) ^{6} \left( 1+b
 \right) ^{6}+{\tfrac {11}{24}} \left( 1-a \right) ^{6}
 \left( 1-b \right) ^{6}+\tfrac{1}{24} \left( 1+a \right) ^{6}
 \left( 1-b \right) ^{6}+\tfrac{1}{24} \left( 1-a \right) ^{6}
 \left( 1+b \right) ^{6}-1
\]
is a sum of squared polynomials in $a$ and $b$.''

Our solution is presented in Section~\ref{sec:main}. Our high level approach is to employ a change of variables which reduces the task to proving a sequence of one-variable real inequalities.  This is helpful because every nonnegative univariate polynomial is SOS; hence we can use any mathematical technique to verify the one-variable inequalities.  We establish the one-variable inequalities using techniques from computer algebra. Peculiarly, this approach only works for the specific choice $\rho = 1 - \frac{1}{2k}$; however the proof for general $0 \leq \rho \leq 1 - \frac{1}{2k}$ can be deduced since the two-point inequality is linear in~$\rho$.

\subsubsection{Application to integrality gap instances for $3$-Coloring and Vertex-Cover} \label{sec:intro-application}

Along the lines of~\cite{BBH+12,OZ13}, our SOS proof of the reverse hypercontractive inequality also has applications to integrality gap instances; specifically, for the $3$-Coloring and Vertex-Cover problems. These problems can be put in a common framework by considering the Maximum Independent-Set problem:
\begin{definition}
    Given graph $G = (V,E)$ we define the (fractional) size of its maximum independent~set:
    \[
        \maxis(G) \ \ =\ \  \max \{|S|/|V| : S \subseteq V \text{ such that } E \cap (S \times S) = \emptyset \} \ \ \in\ \  [0,1],
    \]
\end{definition}
There is a one-way connection with $k$-Coloring: any graph~$G$ with chromatic number $\chi(G) \leq k$ has $\maxis(G) \geq 1/k$.  There is a two-way connection with the Vertex-Cover problem: a set $S \subseteq V$ is independent if and only if its complement $\overline{S} = V \setminus S$ is a vertex cover (i.e., every $e \in E$ meets $\overline{S}$).  Thus $\minvc(G)$, the minimum (fractional) size of a vertex cover in~$G$, is equal to $1 - \maxis(G)$.

Finding the chromatic number or minimum vertex cover of a graph is an $\NP$-hard problem; thus it has been common to seek efficient approximation algorithms.  For example, one may seek an efficient algorithm which can $100$-color any $3$-colorable graph, or find a vertex cover of size at most~$1.5$ times the minimum.  Neither of these problems is known to be polynomial-time solvable; nor is either known to be $\NP$-hard.  In fact, the $3$-colorability question shows an enormous gap; we only know an efficient algorithm for $n^{.2111}$-coloring $3$-colorable graphs~\cite{ACC06}, and $\NP$-hardness of $4$-coloring them~\cite{KLS00, GK04}.  For Vertex-Cover, there is an easy linear-time $2$-approximation algorithm~\cite[Gavril 1974]{GJ79}, whereas achieving a $1.36$-approximation is known to be $\NP$-hard~\cite{DS05}.

\paragraph{Previous work on integrality gaps.}  Based on the $40$-year lack of progress on the algorithms side, it is reasonable to suspect that there is no efficient $(2-\eps)$-approximation algorithm for Vertex-Cover.  Similarly, one may suspect that there is no efficient algorithm for $O(1)$-coloring $3$-colorable graphs.  Indeed, these statements are known to be true assuming the Unique-Games Conjecture in the first case~\cite{KR08}, and a closely related variant of the Unique-Games Conjecture in the second~\cite{DMR09}.  However there is reasonable doubt about the Unique-Games Conjecture~\cite{ABS10} and it's important to seek alternative evidence of hardness.  One very good form of evidence is showing that strong, generic polynomial-time optimization algorithms fail to give good approximations to the value of the optimal solution.  Specifically, one can seek \emph{integrality gaps} for the canonical hierarchies of linear programming and semidefinite relaxations of the problem. In this work we will often describe integrality gaps in more ``proof-theoretic language''. For example, instead of saying that for Vertex-Cover, the complete graph~$K_n$ is a factor-$\frac{n-1}{n/2}$ integrality gap instance for the linear program, we will say that linear programming ``fails to certify $\minvc(K_n) > 1/2$, even though $\minvc(K_n) = (n-1)/n$''.

There is a long line of work on integrality gaps for Chromatic-Number, Independent-Set, and Vertex-Cover.  Specific works on integrality gaps for Vertex-Cover include~\cite{KG98,Cha02,ABL02,ABLT06,Tou06,FO06,STT07a,STT07b,GMT08,GM08,Sch08,CMM09,Tul09,GMPT10,GM10,BCGM11} (see Georgiou's thesis~\cite{Geo10} for a recent survey), and papers on integrality gaps for $3$-Coloring include~\cite{KMS98,KG98,AK98a,Cha02,FLS04,AG11}.  Furthermore, almost any paper on the Lov\'asz $\vartheta$-Function~\cite{Lov79a} is implicitly concerned with integrality gaps for these problems.

Both for $3$-Coloring and Vertex-Cover, the integrality gap papers working with the strongest SDP relaxation employ the ``Frankl--R\"odl graphs'' as their hard instances:
\begin{definition}
    Let $n \in \N$ and let $0 \leq \gamma \leq 1$ be such that $(1-\gamma)n$ is an even integer.  The \emph{Frankl--R\"odl} graph $\FR{n}{\gamma}$ is the undirected graph on the $N = 2^n$ vertices $\{-1,1\}^n$ with edge set $\{(x,y) : \hamdist(x,y) = (1-\gamma)n\}$, where $\hamdist(\cdot,\cdot)$ denotes Hamming distance.
\end{definition}
The following theorem is essentially due to Frankl and R\"odl~\cite{FR87} (a few small details are only worked out in~\cite{GMPT10}):
\begin{theorem}     \label{thm:fr}
    There is a universal constant~$K$ such that for all $\gamma \leq 1/4$ it holds that $\maxis(\FR{n}{\gamma}) \leq n(1-\gamma^2/K)^n$.  In particular,
    \[
        \maxis(\FR{n}{\gamma}) \leq o_n(1), \quad  \chi(\FR{n}{\gamma}) = \omega_n(1), \quad        \minvc(\FR{n}{\gamma}) \geq 1 - o_n(1),
    \]
    whenever $\gamma \geq .1 \sqrt{\frac{\log n}{n}}$ and $n$ is sufficiently large.
\end{theorem}

For the problem of $3$-Coloring, integrality gap papers have focused mainly on $\FR{n}{1/4}$, the graph on $\{-1,1\}^n$ in which $(x,y)$ is an edge if and only if $\hamdist(x,y) = (3/4)n$, i.e., $\frac1n\la x,y \ra = -1/2$.  The succession of works~\cite{KMS98,KG98,Cha02} showed that $\FR{n}{1/4}$ is an integrality gap instance for successively stronger SDP relaxations of $3$-Coloring, with Charikar~\cite{Cha02} showing that the strongest of them still fails to certify $\chi(\FR{n}{1/4}) > 3$, even though in fact $\chi(\FR{n}{1/4}) \geq N^{\Omega(1)}$.  (Feige, Langberg, and Schechtman~\cite{FLS04} have one of the few works to employ a non-Frankl--R\"odl graph as an integrality gap instance; however it's for an SDP relaxation weaker than Charikar's.)  A recent work of Arora and Ge~\cite{AG11} shows that the degree-$\polylog(N)$ SOS proof system is able to certify $\chi(\FR{n}{1/4}) \geq 4$; we will give a much stronger result.

Turning to Vertex-Cover, there are factor-$(2-o(1)$) integrality gap instances for $N^{\Omega(1)}$ levels of the Sherali--Adams \emph{linear programming} hierarchy~\cite{CMM09}; this work does not use Frankl--R\"odl graphs.  However, as far as we are aware, the Frankl--R\"odl graphs are the only known factor-$(2-\eps$) integrality gap instances even for the basic SDP relaxation. By employing $\FR{n}{\gamma}$ with $\gamma$ slightly subconstant, it has recently been shown that $\Omega(\sqrt{\frac{\log N}{\log \log N}})$ levels of the Lov\'{a}sz--Schrijver$^+$ SDP hierarchy and $6$~levels of the Sherali--Adams$^+$ SDP hierarchy fail to certify $\minvc({\FR{n}{\gamma}}) > 1/2 + \omega(1)$, even though $\minvc({\FR{n}{\gamma}}) > 1 - o(1)$~\cite{BCGM11}.  Further, \cite{BCGM11}~conjectures (based on numerical evidence) that their $6$-level result can be extended to any constant number of levels.  Since the Sherali--Adams$^+$ hierarchy is stronger than the Sherali--Adams and Lov\'{a}sz--Schrijver$^+$ hierarchies, this conjecture would subsume the other two mentioned results, at least with regards to ruling out polynomial-time (constant-level) algorithms.

\paragraph{Our result.} As an application of our SOS proof for the reverse hypercontractive inequality, we are able to show that for any constant $0 < \gamma \leq 1/4$, the SOS/Lasserre hierarchy can certify $\maxis(\FR{n}{\gamma}) < o(1)$ using degree $4\lceil \frac{1}{4\gamma}\rceil$. In particular, whereas the strong SDP of~\cite{Cha02} fails to certify $\chi(\FR{n}{1/4}) > 3$, we show that the degree-$4$ SOS proof system correctly certifies $\chi(\FR{n}{1/4}) = \omega(1)$.  This improves the work of~\cite{AG11} which shows that degree-$\polylog(N)$ SOS proofs can certify $\chi(\FR{n}{1/4}) \geq 4$.

Our application to Vertex-Cover is not quite as strong. The prior work of~\cite{BCGM11} shows that for any constants $\eps, \gamma > 0$, $\Omega(\eps^2/\gamma)$ levels of  Lov\'{a}sz--Schrijver$^+$  and $6$~levels of Sherali--Adams$^+$ fail to certify $\minvc(\FR{n}{\gamma}) > 1/2 + \eps$, even though $\minvc(\FR{n}{\gamma}) > 1-o(1)$.  Our work shows that the SOS proof system \emph{does} certify $\minvc(\FR{n}{\gamma}) > 1-o(1)$ once the degree is as large as $1/\gamma$.  However, the \cite{BCGM11} result continues to hold for the subconstant value $\gamma = \Theta(\sqrt{\frac{\log n}{n}})$, and they advocate this parameter setting.  On the other hand, not only do we not obtain a constant-degree SOS certification when $\gamma = \Theta(\sqrt{\frac{\log n}{n}})$, our techniques do not work at all unless $\gamma \gg \frac{1}{\log n}$ (though this may be just for a technical reason). In fact, one may speculate that with the choice $\gamma = \Theta(\sqrt{\frac{\log n}{n}})$, the Frankl--R\"odl graphs \emph{are} factor-$(2-o(1))$ integrality gap instances for constant-degree SOS; see Section~\ref{sec:conclusions}.

To obtain our result we need to show an SOS proof for the Frankl--R\"odl Theorem.  A key ingredient in Frankl and R\"odl's original proof is the vertex isoperimetric inequality on~$\bn$, due to Harper.  The standard proof of this inequality involves a ``shifting'' argument which we do not see how to carry out with SOS.  However, it is known that inequalities of this type can also be proved using the reverse hypercontractive inequality studied in this paper.  In particular, Benabbas, Hatami, and Magen~\cite{BHM12} have very recently proven a ``density'' variation of the Frankl--R\"odl Theorem using the reverse hypercontractive inequality. We obtain the SOS proof for the Frankl--R\"odl Theorem by combining our SOS proof for the reverse hypercontractive inequality and an SOS version of the Benabbas--Hatami--Magen proof; see Section~\ref{sec:BHM}.

\section{Preliminaries}

\paragraph{The SOS proof system.}
We describe the SOS (Positivstellensatz) proof system of Grigoriev and Vorobjov~\cite{GV01} using the notation from the work~\cite{OZ13}; for more details, please see that paper.

\begin{definition}
    Let $X= (X_1, \dots, X_n)$ be indeterminates, let $q_1, \dots, q_m, r_1, \dots, r_{m'} \in \R[X]$, and let
    \[
        A = \{q_1 \geq 0, \dots, q_m \geq 0\} \cup \{r_1 = 0, \dots, r_{m'} = 0\}.
    \]
    Given $p \in \R[X]$ we say that $A$ \emph{SOS-proves} $p \geq 0$ \emph{with degree}~$k$, written
    \[
        A \quad\proves{k}\quad p \geq 0,
    \]
    whenever
    \begin{multline*}
        \exists v_1, \dots, v_{m'}  \text{ and SOS } u_0, u_1, \dots, u_m \text{ such that } \\
        p = u_0 + \sum_{i=1}^m u_i q_i + \sum_{j=1}^{m'} v_j r_j, \qquad \text{with } \deg(u_0), \deg(u_i q_i), \deg(v_j r_j) \leq k\ \forall i\in [m], j \in [m'].
    \end{multline*}
    Here we use the abbreviation ``$w \in \R[X]$ is SOS'' to mean $w = s_1^2 + \cdots + s_{t}^2$ for some $s_i \in \R[X]$.    We say that $A$ has a \emph{degree-$k$ SOS refutation} if
    \[
        A \quad\proves{k}\quad -1 \geq 0.
    \]
    Finally, when $A = \emptyset$ we will sometimes use the shorthand
    \[
        \proves{k}\quad p \geq 0,
    \]
    which simply means that $p$ is SOS and $\deg(p) \leq k$.
\end{definition}

\paragraph{Analysis of boolean functions.} Let us recall some standard notation from the field.  We write $\bx \sim \bn$ to denote that the string $\bx$ is drawn uniformly at random from $\bn$.  Given $f \btR$ we sometimes use abbreviations like $\E[f]$ for $\E_{\bx \sim \bn}[f(\bx)]$.  For $f, g \btR$ we also write $\la f, g \ra = \E[fg] = \Ex_{\bx \sim \bn}[f(\bx)g(\bx)]$.  For $-1 \leq \rho \leq 1$ we say that $(\bx, \by) \sim \bn \times \bn$ is a pair of \emph{$\rho$-correlated random strings} if the pairs $(\bx_i, \by_i)$ are independent for $i \in [n]$ and satisfy $\E[\bx_i] = \E[\by_i] = 0$ and $\E[\bx_i \by_i] = \rho$.  The operator $T_\rho$ acts on functions $f \btR$ via $T_\rho f(x) = \E[f(\by) \mid \bx = x]$, where $(\bx, \by)$ is a pair of $\rho$-correlated random strings.

These definitions extend straightforwardly to general product spaces $(\Omega^n, \pi^{\otimes n})$. We say that $(\bx, \by) \in \Omega^n \times \Omega^n$ is a pair of \emph{$\rho$-correlated random strings (under $\pi^{\otimes n})$} if $\bx \sim \pi^{\otimes n}$ and $\by \in \Omega^n$ is randomly chosen as follows: for each $i\in [n]$ independently, $\by_i = \bx_i$ with probability $\rho$ and $\by_i \sim \pi$ otherwise (with probability $1-\rho$).  The operator $T_\rho$ acts on functions $f : \Omega^n \to \R$ via $T_\rho f(x) = \E[f(\by) \mid \bx = x]$, where $(\bx, \by)$ is a pair of $\rho$-correlated random strings.

\paragraph{Simple SOS facts and lemmas.}  We will use the following facts and lemmas in our SOS proofs.  The first one, in particular, we use throughout without comment.

\medskip

\begin{lemma} \label{lemma:add} \ \vspace{-1.75em}
    \begin{gather*}
        \text{If} \quad A \ \ \proves{k}\ \  p \geq 0, \quad A' \ \ \proves{k'}\ \  p' \geq 0,\\
        \text{then} \quad A \cup A' \ \ \proves{\max(k,k')}\ \  p+p' \geq 0.
    \end{gather*}
\end{lemma}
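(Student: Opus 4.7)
The plan is to unfold both hypotheses using the definition of SOS-proves and then combine the two resulting polynomial identities termwise. From $A \proves{k} p \geq 0$, where $A = \{q_1 \geq 0, \dots, q_m \geq 0\} \cup \{r_1 = 0, \dots, r_{m'} = 0\}$, I obtain SOS polynomials $u_0, u_1, \dots, u_m$ and arbitrary polynomials $v_1, \dots, v_{m'}$ witnessing
\[
p \;=\; u_0 + \sum_{i=1}^{m} u_i q_i + \sum_{j=1}^{m'} v_j r_j,
\]
with $\deg(u_0), \deg(u_i q_i), \deg(v_j r_j) \leq k$. Analogously, from $A' \proves{k'} p' \geq 0$ I get a corresponding decomposition $p' = u_0' + \sum_{i} u_i' q_i' + \sum_{j} v_j' r_j'$ with all terms of degree at most $k'$, where the $q_i'$ and $r_j'$ are the constraints of $A'$.

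Next, I would simply add these two identities to obtain
\[
p + p' \;=\; (u_0 + u_0') \;+\; \sum_i u_i q_i \;+\; \sum_i u_i' q_i' \;+\; \sum_j v_j r_j \;+\; \sum_j v_j' r_j'.
\]
This is manifestly a certificate of the required form over the combined constraint set $A \cup A'$: the leading term $u_0 + u_0'$ is SOS, since the class of SOS polynomials is closed under addition (concatenate the two sum-of-squares representations). Each remaining summand either multiplies a constraint of $A$ by an SOS polynomial of degree at most $k$, or multiplies a constraint of $A'$ by an SOS polynomial of degree at most $k'$, or is of the form $v_j r_j$ (resp.\ $v_j' r_j'$) of degree at most $k$ (resp.\ $k'$). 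Every term therefore has degree at most $\max(k, k')$, which yields $A \cup A' \proves{\max(k,k')} p+p' \geq 0$.

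There is no real obstacle here; the only thing worth checking carefully is the bookkeeping that constraints from $A$ and $A'$ coexist inside $A \cup A'$ (so one does not need to ``translate'' certificates) and that the degree bound on each individual product is preserved under the straight copy-over, which gives the $\max$ rather than a sum of the two degrees. Because the manipulation is purely syntactic, the lemma will not require any of the hypercontractive machinery developed later in the paper, and it is appropriate to record it as a basic fact used silently throughout.
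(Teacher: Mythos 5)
Your argument is correct, and it is essentially the only possible proof: unfold both certificates and add them, using closure of SOS polynomials under addition and the fact that degrees do not increase when the two identities are combined term by term. The paper itself states this lemma without proof, treating it as an immediate consequence of the definition (indeed the surrounding text says it is ``used throughout without comment''), so there is no alternative route in the paper to compare against; your write-up simply makes explicit the syntactic bookkeeping the authors left implicit.
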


\medskip

The following fact is a well-known consequence of the Fundamental Theorem of Algebra.
\begin{fact}\label{fact:univariate-sos}
A univariate polynomial $p(x)$ is SOS if it is nonnegative. In other words, we have
\[
\proves{\deg(p)} p(x) \geq 0,
\]
when $p(x) \geq 0$ for all $x \in \R$.
\end{fact}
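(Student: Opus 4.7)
The plan is to factor $p(x)$ over $\mathbb{R}[x]$ into irreducibles and then rearrange the factorization into an explicit sum of two squares. First, I would dispense with trivial cases: if $p \equiv 0$ then trivially $p = 0^2$, and otherwise I observe that the hypothesis $p(x) \geq 0$ for all $x \in \R$ forces $\deg(p)$ to be even and the leading coefficient $c$ to be positive (else $p(x) \to -\infty$ on one side).

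Next I would invoke the classification of real irreducibles: $p(x)$ factors over $\R[x]$ as a product of linear factors and monic irreducible quadratics. Any real root must occur with even multiplicity, since otherwise $p$ would change sign there; so the product of linear factors has the form $R(x)^2$ where $R(x) = \prod_i (x - r_i)^{m_i}$. Any monic irreducible quadratic $x^2 + bx + c$ has negative discriminant and hence equals $\bigl(x + b/2\bigr)^2 + \bigl(c - b^2/4\bigr)$, which is visibly a sum of two squares of real polynomials. Thus
\[
p(x) = c \cdot R(x)^2 \cdot \prod_{j=1}^\ell \bigl(A_j(x)^2 + B_j(x)^2\bigr),
\]
with $c > 0$, $\deg A_j = 1$, and $\deg B_j = 0$.

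The final step is to collapse the product of sums of two squares into a single sum of two squares via the Brahmagupta--Fibonacci identity
\[
(A^2 + B^2)(C^2 + D^2) = (AC - BD)^2 + (AD + BC)^2,
\]
applied inductively over $j = 1, \dots, \ell$. This yields polynomials $U(x), V(x) \in \R[x]$ with $A_1^2 + B_1^2)\cdots (A_\ell^2+B_\ell^2) = U^2 + V^2$, and absorbing $\sqrt{c}\,R(x)$ into each summand gives
\[
p(x) = \bigl(\sqrt{c}\,R(x)\,U(x)\bigr)^2 + \bigl(\sqrt{c}\,R(x)\,V(x)\bigr)^2,
\]
exhibiting $p$ as a sum of (at most) two squares. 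The degrees are in order: each squared polynomial has degree $\tfrac{1}{2}\deg(p)$, so the SOS expression has total degree $\deg(p)$, which is exactly what $\proves{\deg(p)} p \geq 0$ requires. There is no real obstacle here; the only thing to track carefully is that the real roots truly must appear with even multiplicity and that $c > 0$, both of which are immediate from the sign hypothesis.
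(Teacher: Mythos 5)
Your proof is correct and is precisely the classical argument the paper alludes to when it calls the fact ``a well-known consequence of the Fundamental Theorem of Algebra'' and supplies no further details: real nonnegativity forces even multiplicity at real roots and a positive leading coefficient, FTA gives the factorization into $R(x)^2$ times irreducible quadratics (each a sum of two squares), and the Brahmagupta--Fibonacci identity collapses everything to a sum of two squares of degree $\deg(p)/2$. This matches the paper's intent exactly; the only small notational wrinkle is that if $r_i$ has multiplicity $2m_i$ you want $R(x)=\prod_i(x-r_i)^{m_i}$, with the squaring accounting for the full multiplicity.
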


It is also well known that for homogeneous polynomials, one can reduce the number of variables by~$1$ by ``dehomogenizing'' the polynomial, getting an SOS representation (if there is one), and rehomogenizing it to get an SOS representation of the original polynomial. Applying this trick to Fact~\ref{fact:univariate-sos}, we get:
\begin{fact}\label{fact:bivariate-homogeneous-sos}
    A homogeneous bivariate polynomial $p(x,y)$ is SOS if it is nonnegative.
\end{fact}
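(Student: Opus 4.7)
The plan is to execute exactly the dehomogenize / univariate-SOS / rehomogenize outline that the paper alludes to just before the fact. First I would dispose of degenerate cases. If $p\equiv 0$ the conclusion is trivial, so assume $p\not\equiv 0$ is homogeneous of degree $d$. If $d$ were odd then $p(-x,-y)=-p(x,y)$ would force $p(x,y)$ and $-p(x,y)$ to both be nonnegative, contradicting $p\not\equiv 0$. Hence $d=2m$ for some $m\in\N$.

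Next, dehomogenize by setting $q(t) := p(t,1)$. By hypothesis $q(t)\geq 0$ for every $t\in\R$, and $\deg q\leq 2m$. Fact~\ref{fact:univariate-sos} therefore yields a decomposition $q(t)=\sum_i s_i(t)^2$, and comparing degrees forces $\deg s_i\leq m$ for each $i$.

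Now rehomogenize by defining
\[
\tilde s_i(x,y) \;:=\; y^{m}\,s_i(x/y).
\]
Writing $s_i(t)=\sum_{j=0}^{m} c_{ij} t^{j}$ gives $\tilde s_i(x,y)=\sum_{j=0}^{m} c_{ij}\,x^{j}\,y^{m-j}$, which is genuinely a polynomial (this is the one place the bound $\deg s_i\leq m$ is used), homogeneous of degree $m$. Let $R(x,y):=\sum_i \tilde s_i(x,y)^2$, a homogeneous polynomial of degree $2m$. For every $y\neq 0$ we compute
\[
R(x,y) \;=\; y^{2m}\sum_i s_i(x/y)^2 \;=\; y^{2m}\,q(x/y) \;=\; y^{2m}\,p(x/y,\,1) \;=\; p(x,y),
\]
where the last equality uses the homogeneity of $p$. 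Since $R-p$ vanishes on the dense set $\{y\neq 0\}\subseteq\R^{2}$, it is identically zero as a polynomial, giving the SOS representation $p=\sum_i \tilde s_i^{2}$.

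There is no serious obstacle here; the argument is essentially immediate from Fact~\ref{fact:univariate-sos}, and the only bookkeeping point is to track the degree bound $\deg s_i\leq m$ so that the rehomogenization $y^{m}s_i(x/y)$ is polynomial rather than merely rational.
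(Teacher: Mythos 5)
Your proof is correct and is exactly the dehomogenize/rehomogenize argument the paper sketches in the sentence preceding the fact (the paper itself gives no further detail, treating the reduction as standard). The only bookkeeping you add — ruling out odd degree, tracking $\deg s_i \leq m$ so that $y^m s_i(x/y)$ is a polynomial, and extending $R=p$ from $\{y\neq 0\}$ to all of $\R^2$ by density — is precisely what is needed to make the sketch rigorous.
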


Here are some additional lemmas:
\begin{lemma}                                       \label{lem:powers}
    Let $c \geq 0$ be a constant and $X$ an indeterminate.  Then for any $k \in \N^+$,
    \[
        X \geq c \quad\proves{k}\quad X^k \geq c^k.
    \]
\end{lemma}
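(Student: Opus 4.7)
The plan is to give a direct (non-inductive) SOS decomposition by expanding $X$ about the constant $c$ rather than about $0$. Writing $X = (X - c) + c$ and applying the binomial theorem yields
\[
X^k - c^k \;=\; \sum_{i=1}^{k} \binom{k}{i}\, c^{k-i}\,(X-c)^i.
\]
I would then split this sum according to the parity of the exponent $i$, collecting the even-$i$ terms into a first piece and factoring one copy of $(X-c)$ out of the odd-$i$ terms to form the second piece:
\[
X^k - c^k \;=\; \underbrace{\sum_{\substack{2 \le i \le k \\ i\ \text{even}}} \binom{k}{i}\, c^{k-i}\,(X-c)^{i}}_{\displaystyle =: u_0} \;+\; \underbrace{\Biggl(\,\sum_{\substack{1 \le i \le k \\ i\ \text{odd}}} \binom{k}{i}\, c^{k-i}\,(X-c)^{i-1}\Biggr)}_{\displaystyle =: u_1}\cdot (X-c).
\]
This is exactly the template of a degree-$k$ SOS derivation of $X^k \ge c^k$ from $\{X - c \ge 0\}$, provided that $u_0$ and $u_1$ are both SOS polynomials of the appropriate degree.

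To verify the SOS property, I observe that each summand of $u_0$ has the form $\binom{k}{i}\,c^{k-i} \cdot \bigl((X-c)^{i/2}\bigr)^2$: since $i$ is even, $(X-c)^{i/2}$ is an honest polynomial, the coefficient $\binom{k}{i}$ is a positive integer, and $c^{k-i} \ge 0$ because $c \ge 0$ is a nonnegative scalar; hence each summand is a nonnegative scalar times a square and so is SOS, and sums of SOS polynomials are SOS. The identical argument applies to $u_1$, whose summands are $\binom{k}{i}\,c^{k-i}\cdot\bigl((X-c)^{(i-1)/2}\bigr)^2$ with $i-1$ now even. For the degree bookkeeping, the largest exponent of $(X-c)$ in $u_0$ is the largest even integer $\le k$, hence $\deg(u_0) \le k$; similarly $\deg(u_1) \le k-1$, so $\deg\bigl(u_1 \cdot (X-c)\bigr) \le k$. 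Both bounds match the degree-$k$ requirement.

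The main step to identify, and hence the main (mild) obstacle to watch out for, is the choice of expansion point. A first attempt using the naive factorization $X^k - c^k = (X-c)\bigl(X^{k-1} + c X^{k-2} + \cdots + c^{k-1}\bigr)$ stalls because the cofactor need not be nonnegative for general $X \in \R$ (take $X$ very negative), let alone SOS. A naive induction via $X^{k+1} - c^{k+1} = X(X^k - c^k) + c^k(X-c)$ also stalls, because multiplying an SOS polynomial by the indeterminate $X$ does not preserve the SOS property when $X$ can be negative. Re-expanding about $X - c$ dissolves both difficulties at once: the even powers $(X-c)^i$ already present themselves as squares, while the odd powers carry exactly one surplus factor of $(X-c)$ that matches the hypothesis $X - c \ge 0$.
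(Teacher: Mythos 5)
Your proof is correct and is essentially the paper's proof: both expand $X^k - c^k$ via the binomial theorem centered at $c$ to get $\sum_{i=1}^k \binom{k}{i} c^{k-i}(X-c)^i$, and then observe that even powers of $(X-c)$ are squares while odd powers are $(X-c)$ times a square. You merely spell out the even/odd split and the degree bookkeeping that the paper leaves implicit.
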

\begin{proof}
    This follows because
    \[
        X^k - c^k = (X-c + c)^k - c^k = \sum_{i=1}^{k} \tbinom{k}{i} c^{k-i} (X-c)^i
    \]
    and each power $(X-c)^i$ is either a square or $(X-c)$ times a square.
\end{proof}
\begin{lemma} \label{lem:super-CS} For any $k \in \N^+$ we have
\[
    \proves{2k}\quad \left(\tfrac{X+Y}{2}\right)^{2k} \leq \tfrac{X^{2k} + Y^{2k}}{2}.
\]
\end{lemma}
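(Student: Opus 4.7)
My plan is to reduce this to the previously stated Fact~\ref{fact:bivariate-homogeneous-sos} on homogeneous bivariate polynomials. Let
\[
    p(X,Y) \;=\; \tfrac{X^{2k} + Y^{2k}}{2} - \left(\tfrac{X+Y}{2}\right)^{2k}.
\]
This is a polynomial in the two indeterminates $X, Y$, it is homogeneous of degree $2k$, and it has real coefficients. So Fact~\ref{fact:bivariate-homogeneous-sos} applies, and all I need to check is that $p(X,Y) \geq 0$ for every real assignment $(X,Y) \in \R^2$.

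Nonnegativity is immediate from convexity: since $2k$ is even, the function $t \mapsto t^{2k}$ is convex on $\R$, so Jensen's inequality (in its two-point form $\varphi(\tfrac{x+y}{2}) \leq \tfrac{\varphi(x)+\varphi(y)}{2}$) gives exactly $p(X,Y) \geq 0$. Thus $p$ is a nonnegative homogeneous bivariate polynomial, and Fact~\ref{fact:bivariate-homogeneous-sos} produces an SOS representation $p = s_1^2 + \cdots + s_t^2$. Because $p$ is homogeneous of degree $2k$, one may take each $s_j$ to be homogeneous of degree $k$ (by extracting the degree-$k$ homogeneous component of each $s_j$ and observing that all cross-degree contributions must cancel), so every square $s_j^2$ has degree exactly $2k$. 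This shows $\proves{2k} p \geq 0$, which is the statement of the lemma.

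I don't expect any real obstacle here; the only mild subtlety is the degree bookkeeping, i.e., making sure the SOS representation delivered by Fact~\ref{fact:bivariate-homogeneous-sos} actually has degree $2k$ in the sense required by the $\proves{2k}$ notation. The homogeneity argument above handles that, and in fact the proof would go through just as well by a direct algebraic identity such as expanding $(X+Y)^{2k}$ via the binomial theorem and grouping conjugate pairs of monomials $X^{2k-i}Y^i$ and $X^i Y^{2k-i}$, but invoking Fact~\ref{fact:bivariate-homogeneous-sos} is cleaner and matches the style used elsewhere in the paper.
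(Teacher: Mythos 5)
Your proof is correct and follows exactly the paper's own argument: both reduce to Fact~\ref{fact:bivariate-homogeneous-sos} for nonnegative homogeneous bivariate polynomials and establish nonnegativity by convexity of $t \mapsto t^{2k}$. The extra remarks on degree bookkeeping are fine but do not change the approach.
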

\begin{proof}
Since $\tfrac{X^{2k} + Y^{2k}}{2} - \left(\tfrac{X+Y}{2}\right)^{2k}$ is a degree-$2k$ homogeneous polynomial, the claim follows from Fact~\ref{fact:bivariate-homogeneous-sos}: the inequality is indeed true by convexity of $t \mapsto t^k$.
\end{proof}

\section{The hypercontractive inequality in SOS}\label{sec:hypercon}

 As a warmup, we give a simple proof (``in ZFC'') of the
  $(2,q)$-hypercontractive inequality $\|T_\rho f\|_q \leq \|f\|_q$
  for all even integers $q$, which implies
  Theorem~\ref{thm:2-q-hypercon} for all even integers $q$. As
  mentioned, we do this under a significantly weakened moment
  condition:
\paragraph{``$\bs$-Moment Conditions.''} \emph{For a real random variable $\bx_i$, the condition is that $\E[\bx_i^2] = 1$ and}
\[
    \E[\bx_i^{2j-1}] = 0, \quad \E[\bx_i^{2j}] \leq (2s-1)^j \tfrac{\tbinom{s}{j}}{\tbinom{2s}{2j}}\quad \textit{for all integers $1 \leq j \leq s$.}
\]
 Our proof will show that these moment conditions are
  sharp; none of them can be relaxed.

\begin{remark}  By converting to factorials and expanding, one verifies that
\[
    (2s-1)^j \tfrac{\tbinom{s}{j}}{\tbinom{2s}{2j}} = (2j-1)!! \cdot \prod_{i = 1}^{j-1} \frac{2s-1}{2s-(2i+1)}.
\]
It follows that for each fixed $j \in \N^+$, the quantity decreases as a function of $s$ (for $s \geq j$) to the limit $(2j-1)!!$, which is the $(2j)$th moment of a standard Gaussian.  This shows that a standard Gaussian and a uniformly random $\pm 1$ bit both satisfy all of the above moment conditions.
\end{remark}

\begin{theorem}
 Let $\bx=(\bx_1, \dots, \bx_n)$ be a sequence
  independent real random variables satisfying the $s$-Moment
  Conditions. Let $f : \R^n \to \R$,  $s\in\N^+$, and
  $0\leq \rho \leq \sqrt{1/(2s-1)}$. Then $\|T_\rho f(\bx)\|_{2s} \leq
  \| f(\bx)\|_2$.
\end{theorem}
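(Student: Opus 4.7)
The plan is the standard tensorization proof: induct on $n$, with the $n=1$ base case handled by a direct moment computation and the inductive step reduced using Minkowski's integral inequality --- the (ZFC-valid) triangle inequality for the $L^s$-norm alluded to in the introduction as the reason this proof does not directly translate to SOS.

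For the base case $n=1$, a multilinear function in one variable has the form $f(x_1) = a + b x_1$, so $T_\rho f(x_1) = a + \rho b x_1$ and $\|f\|_2^2 = a^2 + b^2$ (using $\E[\bx_1] = 0$ and $\E[\bx_1^2] = 1$). Expanding $\E[(a + \rho b \bx_1)^{2s}]$ by the binomial theorem, the odd-index terms vanish by hypothesis; substituting the moment bound $\E[\bx_1^{2j}] \le (2s-1)^j \binom{s}{j}/\binom{2s}{2j}$ conveniently cancels the factor $\binom{2s}{2j}$, leaving
\[
\|T_\rho f\|_{2s}^{2s} \ \le\ \sum_{j=0}^s \tbinom{s}{j}\, a^{2(s-j)} \bigl((2s-1)\rho^2 b^2\bigr)^j \ =\ \bigl(a^2 + (2s-1)\rho^2 b^2\bigr)^s,
\]
which is at most $(a^2+b^2)^s = \|f\|_2^{2s}$ by the hypothesis $(2s-1)\rho^2 \le 1$. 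Note that equality holds throughout when each $\E[\bx_1^{2j}]$ saturates its bound, so the moment conditions are sharp as claimed.

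For the inductive step, split $\bx = (\bx_1, \bx')$ and factor $T_\rho = T_\rho^{(1)} T_\rho'$, where $T_\rho'$ is the noise operator on the remaining $n-1$ coordinates. Applying the $n=1$ case to $(T_\rho' f)(\cdot, \bx')$ pointwise in $\bx'$ (since $T_\rho' f$ is affine in $x_1$) gives $\|T_\rho f\|_{2s}^{2s} \le \E_{\bx'}\bigl[\bigl(\E_{\bx_1}[(T_\rho' f)^2]\bigr)^s\bigr]$. Minkowski's integral inequality, applied to $(T_\rho' f)^2 \ge 0$ with outer $L^s$-norm (requiring $s \ge 1$, i.e.\ $2s \ge 2$), swaps the order of the two expectations:
\[
\E_{\bx'}\bigl[\bigl(\E_{\bx_1}[(T_\rho' f)^2]\bigr)^s\bigr] \ \le\ \Bigl(\E_{\bx_1}\bigl[\bigl(\E_{\bx'}[(T_\rho' f)^{2s}]\bigr)^{1/s}\bigr]\Bigr)^s.
\]
Applying the inductive hypothesis pointwise in $\bx_1$ to $(T_\rho' f)(\bx_1, \cdot)$ (a multilinear polynomial in $n-1$ variables) bounds the inner $L^{2s}(\bx')$-norm by the $L^2(\bx')$-norm; one more expectation over $\bx_1$ then yields $\|T_\rho f\|_{2s}^{2s} \le \|f\|_2^{2s}$. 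The main subtlety --- not really an obstacle --- is orienting Minkowski in the correct direction; it is precisely this step that has no polynomial, hence no SOS, analogue, which is why the upcoming SOS version of the result takes the detour through an $s$-function formulation as in \cite{BBH+12}.
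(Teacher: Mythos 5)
Your proof is correct and is essentially the paper's proof: the $n=1$ base case is the same term-by-term moment comparison (you sum the binomial to $(a^2+(2s-1)\rho^2 b^2)^s \le (a^2+b^2)^s$, which is a marginal repackaging of the paper's comparison of \eqref{eq:2q-LHS} with \eqref{eq:2q-RHS}), and your inductive step via Minkowski's integral inequality is exactly the tensorization that the paper invokes by citing \cite{KS88} rather than spelling out. The only imprecision is your side remark on sharpness --- saturating the moment bounds alone gives equality only when $\rho^2 = 1/(2s-1)$; the paper instead argues sharpness of each $j$-th moment condition separately by letting $\eps \to 0$ --- but this does not affect the theorem's proof.
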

\begin{proof}
  It is well-known that the hypercontractive inequality tensorizes
  \cite{KS88} (i.e.~the $n=1$ case extends to the general $n$ case by an easy induction) and so it suffices to treat the case $n=1$.  By homogeneity we may also assume $\E[f] = 1$; we thus write   $f(\bx_1) = 1 + \eps \bx_1$ for some $\eps \in \R$.  Raising both sides of the inequality
  to the $(2s)^{\text{th}}$ power and using the odd moment conditions
  ($\E[\bx_1^{2j-1}] = 0$ for all integers $1\leq j\leq s$), we have
\begin{eqnarray}
\|T_\rho f(\bx_1)\|_{2s}^{2s} & = & \sum_{j=0}^s {2s\choose {2j}}
\rho^{2j}\eps^{2j}  \E[x_1^{2j}] \label{eq:2q-LHS}\\
\|f(\bx_1)\|_2^{2s} &=& \sum_{j=0}^s {s\choose
  j}\eps^{2j}. \label{eq:2q-RHS}
\end{eqnarray}
By the even moment conditions $\E[\bx_1^{2j}] \leq (2s-1)^j{s\choose
  j}/{{2s}\choose {2j}}$, each summand in (\ref{eq:2q-LHS}) is at most
the corresponding term in (\ref{eq:2q-RHS}) and the proof is complete. 
\end{proof}

 By considering $\eps \to 0$ in (\ref{eq:2q-LHS}) and
  (\ref{eq:2q-RHS})
it is easy to see for each $j = 1, 2, \dots, s$ in turn that the associated $s$-moment condition cannot be further relaxed.

Our SOS extension
of this result requires the following lemma:
\begin{lemma}
\label{lem:hypercon-lem}
Let $v$ be an even positive integer and let $G_1, \dots, G_v, H_1, \dots, H_v$ be indeterminates.  Then
\[
    \proves{2v}\quad \prod_{i =1}^v G_i H_i \leq \frac{1}{\tbinom{v}{v/2}}\sum_{\substack{T \subset [v] \\ |T| = v/2}} \prod_{i \in T} G_i^2 \prod_{i \in [v] \setminus T} H_i^2.
\]
\end{lemma}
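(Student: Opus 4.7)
The plan is to recognize the inequality as an AM-GM statement in which the complementary-subset involution $T \mapsto [v] \setminus T$ pairs up the RHS into instances of the two-term AM-GM, each of which is literally a square. A quick check confirms we have the right AM-GM: each $G_j^2$ appears in exactly $\binom{v-1}{v/2-1}$ of the monomials $\prod_{i\in T}G_i^2\prod_{i\notin T}H_i^2$ and each $H_j^2$ appears in $\binom{v-1}{v/2}$ of them; these two binomials are equal because $v$ is even, so the geometric mean of the $\binom{v}{v/2}$ monomials is $\prod_{i=1}^v G_i H_i$, matching the claimed LHS.

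Rather than attempt an SOS proof of the full many-term AM-GM (which is the hard direction in general), I would pair each $T$ with $T^c := [v]\setminus T$ and introduce the degree-$v$ polynomials
\[
P_T \;=\; \prod_{i \in T} G_i \prod_{i \in T^c} H_i, \qquad Q_T \;=\; \prod_{i \in T^c} G_i \prod_{i \in T} H_i.
\]
Then $P_T^2$ is exactly the summand of the RHS indexed by $T$, $Q_T^2 = P_{T^c}^2$ is the summand indexed by $T^c$, and crucially the cross term is
\[
P_T \cdot Q_T \;=\; \prod_{i=1}^v G_i H_i.
\]
Hence the trivial square identity
\[
(P_T - Q_T)^2 \;=\; P_T^2 + Q_T^2 - 2\prod_{i=1}^v G_i H_i
\]
directly encodes the pairwise AM-GM we need.

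Since $v\ge 2$ is even we have $T \neq T^c$, so the $\binom{v}{v/2}$ subsets partition into $\binom{v}{v/2}/2$ disjoint complementary pairs. Summing the displayed identity over these pairs and dividing by $\binom{v}{v/2}$ yields
\[
\frac{1}{\binom{v}{v/2}} \sum_{\substack{T \subset [v] \\ |T|=v/2}} \prod_{i \in T} G_i^2 \prod_{i \in T^c} H_i^2 \;-\; \prod_{i=1}^v G_i H_i \;=\; \frac{1}{\binom{v}{v/2}} \sum_{\{T,T^c\}} (P_T - Q_T)^2,
\]
a sum of squares of degree-$v$ polynomials weighted by nonnegative rational coefficients, so the certificate has total degree exactly $2v$. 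The only real obstacle is spotting the complementary-pair trick; once found, there is no computation to grind through, and in particular we sidestep the genuinely hard problem of SOS-proving $\binom{v}{v/2}$-term AM-GM in general.
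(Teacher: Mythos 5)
Your proof is correct and is essentially the same as the paper's: the paper writes $\prod_i G_i H_i$ as the average over all $\binom{v}{v/2}$ subsets $T$ of $P_T Q_T$ (in your notation) and applies $XY \le \tfrac12 X^2 + \tfrac12 Y^2$, i.e.\ $(P_T - Q_T)^2 \ge 0$, to each summand, which after recombining the two halves gives the stated SOS certificate. Your explicit pairing of $T$ with $T^c$ is merely a slightly different bookkeeping of the identical idea, so there is nothing to add.
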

\begin{proof}
    The non-SOS proof would be to just apply the AM-GM inequality.  For the SOS proof we first trivially write
    \[
        \prod_{i \in [v]} G_i H_i = \frac{1}{\tbinom{v}{v/2}} \sum_{\substack{T \subseteq V \\ |T| = v/2}} \left(\prod_{i \in T} G_i \prod_{i \in [v] \setminus T} H_i\right)\left(\prod_{i \in [v]\setminus T} G_i \prod_{i \in T} H_i\right).
    \]
    We then apply the fact that $\proves{2} XY \leq \half X^2 + \half Y^2$ to each summand to deduce
    \begin{align*}
            \proves{2v}\quad \prod_{i =1}^v G_i H_i &\leq \frac{1}{2\tbinom{v}{v/2}} \sum_{\substack{T \subseteq [v] \\ |T| = v/2}} \prod_{i \in T} G_i^2 \prod_{i \in [v] \setminus T} H_i^2 + \frac{1}{2\tbinom{v}{v/2}} \sum_{\substack{T \subseteq [v] \\ |T| = v/2}} \prod_{i \in [v] \setminus T} G_i^2 \prod_{i \in T} H_i^2\\
            &= \frac{1}{\tbinom{v}{v/2}}\sum_{\substack{T \subset [v] \\ |T| = v/2}} \prod_{i \in T} G_i^2 \prod_{i \in [v] \setminus T} H_i^2. \qedhere
    \end{align*}
\end{proof}

We are now ready to state and prove the full version of Theorem~\ref{thm:informal-hypercon}.
\begin{theorem}
    Fix $s \in \N^+$ and write $q = 2s$.  Let $0 \leq \rho \leq \frac{1}{\sqrt{q-1}}$.  Let $n \in \N$ and for each $1 \leq i \leq s$ and each $S \subseteq [n]$, introduce an indeterminate $\wh{f_i}(S)$. For each $x = (x_1, \dots, x_n) \in \R^n$ we write
    \[
        f_i(x) = \sum_{S \subseteq [n]} \wh{f_i}(S)\prod_{j \in S} x_i, \quad T_{\rho} f_i(x) = \sum_{S \subseteq [n]} \rho^{|S|}\wh{f_i}(S)\prod_{j \in S} x_i.
    \]
    Let $\bx = (\bx_1, \dots, \bx_n)$ be a sequence of independent real random variables satisfying the $s$-Moment Conditions.  Then\rnote{are we skipping the 1-function and low-degree corollaries?}
    \begin{equation}  \label{eqn:induct}
        \proves{q}\quad  \E\left[\littleprod_{i=1}^s (T_\rho f_i(\bx))^2\right] \leq \prod_{i=1}^s \E[f_i(\bx)^2].
    \end{equation}
\end{theorem}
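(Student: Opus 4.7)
The plan is to induct on the number of variables $n$, using Lemma~\ref{lem:hypercon-lem} as the SOS substitute for the $(q/2)$-norm triangle inequality that underlies the ZFC tensorization argument. Both the base case $n=1$ and the inductive step ultimately reduce to verifying the same binomial identity, which is where computer algebra enters.

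For the base case $n=1$, each $f_i$ becomes $f_i(x) = a_i + b_i x$ in the indeterminates $a_i = \wh{f_i}(\emptyset)$ and $b_i = \wh{f_i}(\{1\})$, so that $T_\rho f_i(x) = a_i + \rho b_i x$ and $\E[f_i(\bx_1)^2] = a_i^2 + b_i^2$. Expanding,
\begin{equation*}
\prod_{i=1}^s (T_\rho f_i(\bx_1))^2 \;=\; \sum_{(k_1,\dots,k_s) \in \{0,1,2\}^s}\ \prod_{i=1}^s \binom{2}{k_i}\, a_i^{2-k_i} (\rho b_i)^{k_i}\, \bx_1^{\sum_i k_i}.
\end{equation*}
Odd-moment vanishing kills all terms in which $T := \{i : k_i = 1\}$ has odd size, and every remaining summand carries a cross-product $\prod_{i \in T} a_i b_i$ which Lemma~\ref{lem:hypercon-lem} converts to a nonnegative combination of monomials $\prod_{i\in S}a_i^2\prod_{i\in T\setminus S}b_i^2$ over half-size subsets $S\subseteq T$. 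Combining this with the even-moment bound $\rho^{2j}\E[\bx_1^{2j}]\leq \binom{s}{j}/\binom{2s}{2j}$, every piece of the bounded LHS is a nonnegative multiple of some monomial $\prod_{i\in A}a_i^2\prod_{i\in B'}b_i^2$ with $A\sqcup B' = [s]$. Since the RHS $\prod_i(a_i^2+b_i^2) = \sum_{A\sqcup B'=[s]}\prod_A a_i^2\prod_{B'}b_i^2$ has coefficient $1$ on each such monomial, the entire SOS inequality reduces, for each $m = |B'| \in \{0,1,\dots,s\}$, to
\begin{equation*}
\sum_{t=0}^{m} \binom{m}{t}\binom{s-m}{t}\,\frac{4^t}{\binom{2t}{t}} \;\leq\; \frac{\binom{2s}{2m}}{\binom{s}{m}},
\end{equation*}
which the plan is to verify (in fact as an equality) using computer algebra.

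The inductive step is a ``function-valued'' version of the same calculation. Writing $\bx = (\bx',\bx_n)$ and decomposing $f_i(x) = g_i(x') + h_i(x')x_n$, we have $T_\rho f_i(x) = T_\rho g_i(x') + \rho\, T_\rho h_i(x')\,x_n$ (the $T_\rho$ on the right-hand side acting on $\bx'$). Expanding $\prod_i (T_\rho f_i(\bx))^2$ in powers of $\bx_n$ and taking the $\bx_n$-expectation produces exactly the same expression as in the base case, but with the indeterminates $a_i,b_i$ replaced by the polynomials $G_i = T_\rho g_i(\bx')$ and $H_i = T_\rho h_i(\bx')$. Applying Lemma~\ref{lem:hypercon-lem} to each $\prod_{i\in T}G_i H_i$ and then invoking the inductive hypothesis on every surviving $\E_{\bx'}\bigl[\prod_i (T_\rho u_i(\bx'))^2\bigr]$ with $u_i\in\{g_i,h_i\}$ bounds it by $\prod_{i\in A}\E[g_i^2]\prod_{i\in B'}\E[h_i^2]$, with the same per-monomial coefficient as in the base case; the same binomial identity then telescopes these bounds into $\prod_i(\E[g_i^2]+\E[h_i^2]) = \prod_i\E[f_i^2]$.

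The main obstacle is the binomial identity itself, which does not appear to be a standard one and seems to require automated verification (e.g.\ via the Wilf--Zeilberger method). The surrounding bookkeeping (tracking which indices have $k_i\in\{0,1,2\}$ and how Lemma~\ref{lem:hypercon-lem} redistributes the cross-products) is intricate but mechanical, and the fact that the identity must hold with equality gives a reassuring explanation for the sharpness of the $s$-Moment Conditions.
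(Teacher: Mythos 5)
Your proof is correct and takes essentially the same route as the paper's: induct on $n$, decompose $f_i$ into the part with and without $x_n$, use Lemma~\ref{lem:hypercon-lem} to dispatch the cross-terms, combine the moment bound and inductive hypothesis, and close the argument with the binomial identity $\sum_{t}\binom{m}{t}\binom{s-m}{t}\frac{4^t}{\binom{2t}{t}} = \binom{2s}{2m}/\binom{s}{m}$, verified via Zeilberger. The only (cosmetic) difference is that the paper starts the induction at $n=0$, which is trivial, whereas you start at $n=1$; as you observe, your base case then duplicates the inductive step's computation, which the $n=0$ choice avoids.
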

\begin{proof}
We prove~\eqref{eqn:induct} by induction on~$n$.  The base case, $n = 0$, is trivial.  For general $n \geq 1$, we can decompose each $f_i(x)$ as
\[
    f_i(x_1, \dots, x_n) = x_n g_i(x_1, \dots, x_{n-1}) + h_i(x_1, \dots, x_{n-1}).
\]
Formally, this means introducing the shorthand $h_i(x_1, \dots, x_{n-1}) = \sum_{S \not \ni n} \wh{f_i}(S) \prod_{j \in S} x_i$, and similarly for~$g_i$.  We also introduce the notation $\bF_i = f_i(\bx)$, $\wt{\bF}_i = T_{\rho} f_i(\bx)$ for each $i$, and similarly $\bG_i, \wt{\bG}_i, \bH_i, \wt{\bH}_i$.  Note that these latter four do not depend on~$\bx_n$.  By definition we have $\wt{\bF}_i = \rho \bx_n \wt{\bG}_i + \wt{\bH}_i$.

Using the fact that $\bx_n$ is independent of all $\bG_i$, $\bH_i$ and has zero odd moments, the left-hand side of~\eqref{eqn:induct} can be written as follows:
\begin{align}
&\phantom{=}\ \E\left[\littleprod_{i=1}^s \left(\rho^2\bx_n^2 \wt{\bG}_i^2 + 2\rho\bx_n \wt{\bG}_i \wt{\bH}_i +
\wt{\bH}_i^2\right)\right]  \nonumber\\
&= \sum_{\substack{\text{partitions } \\ (U,V,W) \text{ of } [s]}} \rho^{2|U|+|V|} 2^{|V|} \E\left[\bx_n^{2|U| + |V|}\right] \E\left[\littleprod_{i \in U} \wt{\bG}_i^2 \littleprod_{i \in V} \wt{\bG}_i \wt{\bH}_i \littleprod_{i \in W} \wt{\bH}_i^2\right]  \nonumber\\
 &= \sum_{u=0}^s \sum_{\substack{v = 0\\v\text{ even}}}^{s-u} \rho^{2u+v} 2^{v} \E[\bx_n^{2u+v}] \sum_{\substack{(U,V,W) \\ |U| = u \\ |V| = v}}  \E\left[\littleprod_{i \in U} \wt{\bG}_i^2 \littleprod_{i \in W} \wt{\bH}_i^2 \littleprod_{i \in V} \wt{\bG}_i \wt{\bH}_i\right].  \label{eqn:insanity}
\end{align}
We apply Lemma~\ref{lem:hypercon-lem} to each $\littleprod_{i \in V} \wt{\bG}_i \wt{\bH}_i$ (notice that each is multiplied against an SOS polynomial) to obtain
\begin{align}
    \proves{q}\quad \eqref{eqn:insanity} &\leq \sum_{u=0}^s \sum_{\substack{v =
        0\\v\text{ even}}}^{s-u} \frac{\rho^{2u+v}2^{v}}{\tbinom{v}{v/2}}
    \E[\bx_n^{2u+v}] \sum_{\substack{(U,V,W) \\ |U| = u \\ |V| = v}}
    \sum_{\substack{T \subseteq V \\ |T| = v/2}}
    \E\left[\littleprod_{i \in U \cup T} \wt{\bG}_i^2 \littleprod_{i \in W
        \cup (V \setminus T)} \wt{\bH}_i^2\right] \nonumber\\
    & \leq  \sum_{u=0}^s \sum_{\substack{v = 0\\v\text{ even}}}^{s-u}
    \frac{2^{v}}{\tbinom{v}{v/2}}
    \frac{\tbinom{s}{u+v/2}}{\tbinom{2s}{2u+v}} \sum_{\substack{(U,V,W)
        \\ |U| = u \\ |V| = v}} \sum_{\substack{T \subseteq V \\ |T| =
        v/2}} \E\left[\littleprod_{i \in U \cup T} \wt{\bG}_i^2
      \littleprod_{i \in W \cup (V \setminus T)} \wt{\bH}_i^2\right] \nonumber\\
    & \leq   \sum_{u=0}^s \sum_{\substack{v = 0\\v\text{ even}}}^{s-u} \frac{2^{v}}{\tbinom{v}{v/2}} \frac{\tbinom{s}{u+v/2}}{\tbinom{2s}{2u+v}} \sum_{\substack{(U,V,W) \\ |U| = u \\ |V| = v}} \sum_{\substack{T \subseteq V \\ |T| = v/2}} \prod_{i \in U \cup T} \E[\bG_i^2] \prod_{i \in W \cup (V \setminus T)}\E[\bH_i^2], \label{eqn:insanity2}
\end{align}
where the second inequality uses the $s$-Moments Condition and the bound on~$\rho$, and the third inequality uses the induction hypothesis.  (Again, note that each inequality is multiplied against an SOS polynomial.) It is easy to check that $\E[\bF_i^2] = \E[\bG_i^2] + \E[\bH_i^2]$ and so  the right-hand side of~\eqref{eqn:induct} is simply
\[
    \sum_{R \subseteq [s]} \prod_{i \in R} \E[\bG_i^2] \prod_{i \in [s] \setminus R} \E[\bH_i^2].
\]
Thus to complete the inductive proof, it suffices to show that for each $R \subseteq [s]$, the coefficient on $\prod_{i \in R} \E[\bG_i^2] \prod_{i \in [s] \setminus R} \E[\bH_i^2]$ in~\eqref{eqn:insanity2} is equal to~$1$.  By symmetry, and taking the sum over $v$ first in~\eqref{eqn:insanity2}, it suffices to check that for each $r = |R| = |U \cup T| \in \{0, 1, \dots, s\}$ we have
\begin{equation} \label{eqn:use-z}
    \sum_{v' = 0}^r \frac{2^{2v'}}{\tbinom{2v'}{v'}}\frac{\tbinom{s}{r}}{\tbinom{2s}{2r}}\tbinom{r}{v'} \tbinom{s-r}{v'} = 1. 
\end{equation}
With a modest amount of work it is possible to prove this identity by ``traditional'' enumerative combinatorics methods; however it is much more efficient to simply use Zeilberger's algorithm~\cite{Zei90,PWZ97}. This algorithm automatically generates the key rational function
\[
  R(r, v') = \frac{(1+2r-s)v'(2v'-1)}{(2r-2s+1)(r-s)(1+r-v')}.
\]
Then, writing $t(r, v')$ for the expression in the sum on the
left-hand side of~\eqref{eqn:use-z}, we have
\[
 t(r+1, v') - t(r,v') = R(r,v'+1)t(r,v'+1) - R(r,v')t(r,v'),
\]
as can be verified by a trivial calculation.
Summing the above equation for $v'=0,\dots,r$ shows that
$T(r+1)-T(r)=0$, where $T(r)=\sum_{v'=0}^r t(r,v')$.
Together with the initial value $T(0)=1$, it follows
by induction that $T(r)=1$ for all~$r$, as required.
\end{proof}

\section{The reverse hypercontractive inequality in SOS}                \label{sec:reverse-hypercon}

This section is devoted to providing a proof Theorem~\ref{thm:our-reverse-informal}, the reverse hypercontractivity in the SOS proof system.  More precisely:
\begin{theorem}                                     \label{thm:our-SOS-reverse}
    Let $k, n\in \N^+$, let $0 \leq \rho \leq 1 - \frac{1}{2k}$, and let $f(x)$, $g(x)$ be indeterminates for each $x \in \bn$.  Then
\[
        \proves{4k}\quad \Ex_{\substack{(\bx, \by) \\ \text{$\rho$-corr'd}}}[f(\bx)^{2k}g(\by)^{2k}] \geq \E[f]^{2k}\E[g]^{2k}.
\]
\end{theorem}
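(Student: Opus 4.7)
The plan is to reduce the main statement, by induction on the number of coordinates, to an $n=1$ ``two-point'' SOS inequality in four indeterminates; then to homogeneity-reduce that to a bivariate polynomial inequality in just two indeterminates; and finally to prove the bivariate inequality (at the extremal value $\rho = 1 - \frac{1}{2k}$) via a change of variables that produces a sequence of univariate inequalities accessible to computer algebra.

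\textbf{Induction on $n$.} For the inductive step I would peel off the last coordinate. Writing $F_\pm(x') := f(x', \pm 1)$ and $G_\pm(y') := g(y', \pm 1)$, and using the fact that $(\bx_n, \by_n)$ is an independent $\rho$-correlated pair,
\[
    \E_{(\bx,\by)}[f(\bx)^{2k} g(\by)^{2k}] \;=\; \E_{(\bx',\by')}\!\left[\E_{(\bx_n,\by_n)}\!\left[F_{\bx_n}(\bx')^{2k} G_{\by_n}(\by')^{2k}\right]\right].
\]
For each outcome of $(\bx', \by')$ I apply the $n=1$ two-point SOS inequality (taking its four indeterminates to be $F_\pm(\bx'), G_\pm(\by')$) to lower-bound the inner expectation by $\tilde f(\bx')^{2k}\tilde g(\by')^{2k}$, where $\tilde f(x') := \half(F_+(x') + F_-(x'))$ and $\tilde g$ is analogous. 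Averaging over $(\bx',\by')$ preserves the SOS inequality, and then the inductive hypothesis in $n-1$ coordinates applied to $\tilde f, \tilde g$ yields the lower bound $\E[\tilde f]^{2k}\E[\tilde g]^{2k} = \E[f]^{2k}\E[g]^{2k}$. Each step is a degree-$4k$ SOS inequality, so they compose.

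\textbf{Reduction to a bivariate inequality.} The two-point inequality is a polynomial in four indeterminates $f(\pm 1), g(\pm 1)$, separately homogeneous of bidegree $(2k, 2k)$ in the pairs $(f(1), f(-1))$ and $(g(1), g(-1))$. A standard double-dehomogenization (substitute $f(\pm 1) = 1 \pm \alpha$ and $g(\pm 1) = 1 \pm \beta$) reduces matters to showing that
\[
    P_{\rho,k}(\alpha,\beta) := \tfrac{1+\rho}{4}\bigl[(1+\alpha)^{2k}(1+\beta)^{2k} + (1-\alpha)^{2k}(1-\beta)^{2k}\bigr] + \tfrac{1-\rho}{4}\bigl[(1+\alpha)^{2k}(1-\beta)^{2k} + (1-\alpha)^{2k}(1+\beta)^{2k}\bigr] - 1
\]
is SOS in $\alpha,\beta$; an SOS decomposition of bidegree $\leq (k,k)$ for $P_{\rho,k}$ rehomogenizes back to a degree-$4k$ SOS certificate in the four original indeterminates. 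Note that $P_{\rho,k}$ is linear in $\rho$, so it suffices to handle the two endpoints $\rho \in \{0, 1-\tfrac{1}{2k}\}$. At $\rho = 0$ the expression factors as the product of two univariate ``power mean'' quantities and is easily handled by iterating Lemma~\ref{lem:super-CS}. All the real difficulty lies at $\rho = 1 - \frac{1}{2k}$, which is exactly the puzzle displayed in the introduction (the $k=3$ instance).

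\textbf{The hard bivariate SOS at $\rho = 1 - \frac{1}{2k}$.} Since not every nonnegative bivariate polynomial is SOS, there is no bivariate analogue of Fact~\ref{fact:univariate-sos} to invoke, and this is the main obstacle. Following the strategy announced in the introduction, I would search for a polynomial change of variables $(\alpha, \beta) \mapsto (u, w)$ under which $P_{1-1/(2k),k}$ admits an explicit decomposition of the shape
\[
    P_{1-1/(2k),k}(\alpha,\beta) \;=\; \sum_i \sigma_i(\alpha,\beta)^2 \cdot q_i\bigl(\phi_i(\alpha,\beta)\bigr),
\]
where each $\sigma_i$ and each $\phi_i$ is a polynomial in $(\alpha, \beta)$, and each $q_i$ is a \emph{univariate} polynomial. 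The point is that nonnegativity of each $q_i$ on the range of $\phi_i$ can be established by \emph{any} means whatsoever, since Fact~\ref{fact:univariate-sos} then upgrades that nonnegativity to an SOS certificate and the whole expression becomes SOS. I expect the real work to be of two kinds: first, designing the change of variables and the decomposition so that the scheme works uniformly for every $k$ (the peculiar restriction to the precise value $\rho = 1 - \frac{1}{2k}$ signals a delicate cancellation that must be identified and exploited); and second, rigorously verifying the resulting family of univariate inequalities, for which I would turn to computer-algebra tools such as resultants, Sturm sequences, Zeilberger's algorithm, or explicit ans\"atze for SOS certificates of the $q_i$, in the spirit of the very mild use of computer algebra already made in Section~\ref{sec:hypercon}.
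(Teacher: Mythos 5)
Your plan matches the paper's proof structure precisely: induct on $n$ down to a four-indeterminate $n=1$ base case, bi-dehomogenize to the bivariate $P_{\rho,k}(a,b)$, exploit linearity in $\rho$ to reduce to the endpoints $\rho\in\{0,\rho^*\}$, and attack the hard endpoint by a change of variables that turns the remaining nonnegativity checks into univariate ones. Your induction (two-point inequality applied pointwise on the last coordinate, then the inductive hypothesis on the averaged $\tilde f,\tilde g$) is a valid reordering of the paper's (inductive hypothesis applied to $f_i,g_j$ first, then the two-point inequality applied to the four expectations $\E[f_i],\E[g_j]$).

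The substantive gap is that you leave the key change of variables and decomposition as a ``search.'' The paper's choice is $r=a-b$, $s=a+b$, $t=ab$, eliminating $r$ via $r^2=s^2-4t$ and observing that at $\rho=\rho^*$ one has $P_k(a,b)=\sum_{i=0}^{k} Q_{k,i}(t)\,s^{2i}$ — i.e.\ in your schema $\sigma_i=(a+b)^i$ and every $\phi_i$ is the single polynomial $t=ab$. Proving $Q_{k,i}(t)\ge 0$ for all real $t$ is the technical heart: for $Q_{k,0}$ a Zeilberger recurrence yields the closed form $\cos(4k\arctan\sqrt t)$, followed by Taylor estimates and a case split; for $i\ge 1$ a separate two-index recurrence reduces to boundary cases. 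You are right that the restriction to $\rho=\rho^*=1-\tfrac1{2k}$ is where the delicate cancellation occurs (for smaller $\rho$ the coefficient polynomials can go negative). One caution on your schema: you say it suffices that each $q_i$ be nonnegative ``on the range of $\phi_i$,'' but Fact~\ref{fact:univariate-sos} only converts \emph{global} univariate nonnegativity into an SOS certificate; nonnegativity on a proper subinterval would require a Positivstellensatz for intervals with extra multiplier terms. The paper sidesteps this because $t=ab$ ranges over all of $\R$ and the $Q_{k,i}$ are shown globally nonnegative, but any decomposition you design must respect this constraint.
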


For each fixed $k$, we prove Theorem~\ref{thm:our-SOS-reverse} by
induction on~$n$.  The $n = 1$ base case of the induction is the
following $4$-variable inequality:
\begin{theorem}
\label{thm:reverse-hypercon-base}
    Let $k \in \N^+$ and let $0 \leq \rho \leq 1 - \frac{1}{2k}$. Let $F_0, F_1, G_0, G_1$ be real indeterminates.  Then
    \[
        \proves{4k} \quad
          (\tfrac14 + \tfrac14 \rho) \Bigl(F_0^{2k} G_0^{2k} + F_1^{2k} G_1^{2k}\Bigr)
        + (\tfrac14 - \tfrac14 \rho) \Bigl(F_0^{2k} G_1^{2k} + F_1^{2k} G_0^{2k}\Bigr)
         \geq
        \Bigl(\tfrac{F_0+F_1}{2}\Bigr)^{2k}\Bigl(\tfrac{G_0+G_1}{2}\Bigr)^{2k}.
    \]
\end{theorem}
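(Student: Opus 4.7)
The plan has four conceptual steps, reducing the four-variable inequality to a collection of univariate inequalities, each certifiable because nonnegative univariate polynomials are automatically SOS (Fact~\ref{fact:univariate-sos}).

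\emph{Step 1: reduction to the endpoint $\rho^* = 1-\tfrac{1}{2k}$.} The LHS of the target inequality is affine in $\rho$ while the RHS is independent of $\rho$. Writing $\rho = \lambda \rho^*$ with $\lambda \in [0,1]$, the difference $P_\rho$ between the two sides decomposes as $(1-\lambda)P_0 + \lambda P_{\rho^*}$. It therefore suffices to give degree-$4k$ SOS proofs of $P_0 \ge 0$ and of $P_{\rho^*}\ge 0$ separately. The $\rho = 0$ case is easy: the statement collapses to $\tfrac{F_0^{2k}+F_1^{2k}}{2}\cdot \tfrac{G_0^{2k}+G_1^{2k}}{2} \ge \bigl(\tfrac{F_0+F_1}{2}\bigr)^{2k}\bigl(\tfrac{G_0+G_1}{2}\bigr)^{2k}$, which follows by applying Lemma~\ref{lem:super-CS} separately in $(F_0,F_1)$ and in $(G_0,G_1)$ together with the identity $XY - X'Y' = (X-X')Y + X'(Y-Y')$ (each factor being SOS).

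\emph{Step 2: dehomogenization.} At $\rho = \rho^*$ the polynomial $P_{\rho^*}$ is bi-homogeneous of bi-degree $(2k, 2k)$ in $(F_0, F_1)$ and $(G_0, G_1)$. By the bi-homogeneous analog of Fact~\ref{fact:bivariate-homogeneous-sos}, it suffices to prove the inequality after the affine substitution $F_0=1+a$, $F_1=1-a$, $G_0=1+b$, $G_1=1-b$. This reduces the task to showing that the explicit bivariate polynomial
\[
p(a,b) = \bigl(\tfrac12 - \tfrac{1}{8k}\bigr)\bigl[(1+a)^{2k}(1+b)^{2k} + (1-a)^{2k}(1-b)^{2k}\bigr] + \tfrac{1}{8k}\bigl[(1+a)^{2k}(1-b)^{2k} + (1-a)^{2k}(1+b)^{2k}\bigr] - 1
\]
is SOS in $\R[a,b]$---precisely the ``puzzle'' displayed in the introduction (for $k=3$).

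\emph{Steps 3--4: univariate reduction and computer-algebra verification.} I plan to find a change of variables together with a decomposition $p(a,b) = \sum_i q_i(\ell_i(a,b)) \cdot r_i(a,b)^2$, where each $\ell_i$ is a linear form in $a,b$, each $r_i$ is a polynomial in $\R[a,b]$, and each $q_i(x)$ is a univariate polynomial nonnegative on $\R$. Because every nonnegative univariate polynomial is SOS, such a decomposition is itself an SOS certificate for $p$. The right substitution should exploit the fact that $p(0,b)\ge 0$ and $p(a,0)\ge 0$ both collapse to univariate instances of Lemma~\ref{lem:super-CS}, as well as the algebraic structure at the endpoint $\rho=\rho^*$. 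Once candidate univariate polynomials $q_i$ parameterized by $k$ are identified, their nonnegativity can be checked using computer algebra, in the spirit of the Zeilberger-style binomial identity verified in Section~\ref{sec:hypercon}.

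The hard part is the univariate reduction: finding a change of variables yielding a \emph{clean} decomposition uniformly in $k$. This appears to require genuine algebraic insight rather than routine calculation, since generic bivariate SOS strategies do not suffice and the decomposition must reflect an algebraic coincidence that (per the paper's warning) exists \emph{only} at $\rho=\rho^*$---which is precisely why the linearity-in-$\rho$ reduction of Step~1 is essential to cover all $\rho \in [0,\rho^*]$.
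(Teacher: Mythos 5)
Your Steps~1 and~2 match the paper's own reduction exactly: handle $\rho=0$ via Lemma~\ref{lem:super-CS}, use affinity in $\rho$ to reduce to the endpoint $\rho^*=1-\tfrac{1}{2k}$, and bi-dehomogenize to the bivariate polynomial $P_k(a,b)$ (the paper justifies the re-homogenization by the degree bound $\deg_a R_i,\deg_b R_i\le k$ for each square $R_i^2$ in a putative SOS representation). The real content, however, is in your Steps~3--4, and there the proposal has a genuine gap.

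Your template asks for a decomposition $p(a,b)=\sum_i q_i(\ell_i(a,b))\,r_i(a,b)^2$ with each $\ell_i$ a \emph{linear} form. The decomposition that actually works is \emph{not} of this shape: the paper's key change of variables is $s=a+b$, $t=ab$, using $(a-b)^2=s^2-4t$ to eliminate $a-b$ and writing $P_k(a,b)=\sum_{i=0}^k Q_{k,i}(t)\,s^{2i}$. The inner argument $t=ab$ is a quadratic form, not linear, so your restriction to linear $\ell_i$ would never produce this certificate. (There is no reason to restrict: a univariate polynomial nonnegative on $\R$ remains SOS after composition with \emph{any} polynomial map, since one simply substitutes into the squares.) The elementary-symmetric substitution $(a,b)\mapsto(s,t)$ is the algebraic insight the whole proof hinges on, and it is absent from your outline.

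Separately, you describe the nonnegativity of the resulting univariate polynomials as something to ``check with computer algebra, in the spirit of the Zeilberger-style binomial identity verified in Section~\ref{sec:hypercon}.'' In fact this is where most of the work lies. For $Q_{k,0}(t)$ the paper does not merely verify an identity; Zeilberger's algorithm is used to find a $k$-independent recurrence, which is then solved in closed form as $\cos(4k\arctan\sqrt t)$, and nonnegativity is established by a two-case analysis using a degree-$6$ Taylor-style lower bound $\kappa(x)\le\cos(x)$ together with a quadratic discriminant check. For $Q_{k,i}(t)$ with $i\ge1$ a different (guessed-and-verified) three-term recurrence in $(k,i)$ reduces the claim to the $i=0$ and $k=i$ base cases. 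Both of these arguments require ideas beyond routine computer-algebra verification, and your proposal does not indicate how either family would be handled. Without the $(s,t)$ substitution and the nonnegativity analysis of the $Q_{k,i}$, the plan as written does not go through.
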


\noindent Proving this base case will be the key challenge; for now, we give the induction which proves Theorem~\ref{thm:our-SOS-reverse}.

\medskip

\begin{proof}[Proof of Theorem~\ref{thm:our-SOS-reverse}]
  Let $n > 1$.  Given indeterminates $f(x)$, $g(x)$ for $x \in \bn$, let $f_0(x)$ be shorthand for $f(x_1, \dots, x_{n-1}, 1)$, let $f_1(x)$ be shorthand for  $f(x_1, \dots, x_{n-1}, -1)$, and similarly define shorthands $g_0$, $g_1$.  Now
\begin{align*}\smash{\E_{\substack{(\bx,\by) \\ \text{$\rho$-corr'd}}}}[f(\bx)^{2k}g(\by)^{2k}] = \ &(\tfrac{1}{4}+\tfrac{1}{4}\rho)
\E[f_0(\bx)^{2k}g_0(\by)^{2k}] \\
 + \ &(\tfrac{1}{4}+\tfrac{1}{4}\rho)
\E[f_1(\bx)^{2k}g_1(\by)^{2k}] \\
 + \ &(\tfrac{1}{4}-\tfrac{1}{4}\rho)
\E[f_0(\bx)^{2k}g_1(\by)^{2k}] \\
 + \ &(\tfrac{1}{4}-\tfrac{1}{4}\rho)
\E[f_1(\bx)^{2k}g_0(\by)^{2k}].
\end{align*}
By four applications of induction, we deduce
\begin{align*}
\proves{4k}\quad \smash{\E_{\substack{(\bx,\by) \\ \text{$\rho$-corr'd}}}}[f(\bx)^{2k}g(\by)^{2k}] \geq \ &(\tfrac{1}{4}+\tfrac{1}{4}\rho)
\E[f_0(\bx)]^{2k}\E[g_0(\by)]^{2k} \\
 + \ &(\tfrac{1}{4}+\tfrac{1}{4}\rho)
\E[f_1(\bx)]^{2k}\E[g_1(\by)]^{2k} \\
 + \ &(\tfrac{1}{4}-\tfrac{1}{4}\rho)
\E[f_0(\bx)]^{2k}\E[g_1(\by)]^{2k} \\
 + \ &(\tfrac{1}{4}-\tfrac{1}{4}\rho)
\E[f_1(\bx)]^{2k}\E[g_0(\by)]^{2k}.
\end{align*}
Now applying the $n = 1$ base case of the induction (Theorem \ref{thm:reverse-hypercon-base}) to the right-hand side of the above  we conclude that
\[
    \proves{4k}\quad \E_{\substack{(\bx,\by) \\ \text{$\rho$-corr'd}}}[f(\bx)^{2k}g(\by)^{2k}] \geq
\left(\tfrac{\E[f_0(\bx)]+\E[f_1(\bx)]}{2}\right)^{2k}
\left(\tfrac{\E[g_0(\by)]+\E[g_1(\by)]}{2}\right)^{2k} = \E[f]^{2k}\E[g]^{2k}. \qedhere
\]
\end{proof}

Our remaining task is to prove the $4$-variable base case, Theorem~\ref{thm:reverse-hypercon-base}.  Let us make a few simplifications.  First, we claim it suffices to prove it in the case $\rho = \rho^* = 1-\frac{1}{2k}$.  To see this, note that
\[
    (\tfrac14 + \tfrac14 \rho) \Bigl(F_0^{2k} G_0^{2k} + F_1^{2k} G_1^{2k}\Bigr)
  + (\tfrac14 - \tfrac14 \rho) \Bigl(F_0^{2k} G_1^{2k} + F_1^{2k} G_0^{2k}\Bigr)
  - \Bigl(\tfrac{F_0+F_1}{2}\Bigr)^{2k}\Bigl(\tfrac{G_0+G_1}{2}\Bigr)^{2k}
\]
is linear in~$\rho$.  Thus if we can show it is SOS for both $\rho = 0$ and $\rho = \rho^*$, it follows easily that it is SOS for all $0 < \rho < \rho^*$.  And for $\rho = 0$ the task is easy:
\[
    \proves{4k}\quad \tfrac14\left(F_0^{2k} G_0^{2k} + F_1^{2k} G_1^{2k}\right)
  + \tfrac14\left(F_0^{2k} G_1^{2k} + F_1^{2k} G_0^{2k}\right) = \left(\tfrac{F_0^{2k} + F_1^{2k}}{2}\right)\left(\tfrac{G_0^{2k} + G_1^{2k}}{2}\right) \geq \left(\tfrac{F_0+F_1}{2}\right)^{2k}\left(\tfrac{G_0+G_1}{2}\right)^{2k}
\]
by Lemma~\ref{lem:super-CS}.
\ignore{\begin{proof}[Proof of \eqref{eqn:rev-goal2} from Lemma~\ref{lem:rev-goal3} and Lemma~\ref{lem:rev-goal4}]
We have the following polynomial identity (in the values of $a$ and $b$).
\begin{align*}
\text{(LHS of \eqref{eqn:rev-goal2})} = \frac{\rho^* - \rho}{\rho^*} \cdot \text{(LHS of \eqref{eqn:rev-goal3})} + \frac{\rho}{\rho^*} \cdot \text{(LHS of \eqref{eqn:rev-goal4})} .
\end{align*}
Since $\frac{\rho^* - \rho}{\rho^*}$ and $\frac{\rho}{\rho^*}$ are nonnegative when $0 \leq \rho \leq \rho^*$, we can prove the nonnegativity of both $\frac{\rho^* - \rho}{\rho^*} \cdot \text{(LHS of \eqref{eqn:rev-goal3})}$ and $\frac{\rho}{\rho^*} \cdot \text{(LHS of \eqref{eqn:rev-goal4})}$ via degree-$4k$ SOS proofs. Therefore \eqref{eqn:rev-goal2} also admits a degree-$4k$ SOS proof.
\end{proof}

To prove Lemma~\ref{lem:rev-goal3}, observe that by Lemma~\ref{lem:super-CS}, we have
\begin{align*}
\proves{2k} \quad \half(1+a)^{2k} + \half(1-a)^{2k} \geq 1 \qquad \text{and} \qquad \quad \proves{2k} \qquad \half(1+a)^{2k} + \half(1-a)^{2k} \geq 1 .
\end{align*}
Therefore
\begin{align*}
\proves{4k} \quad \Bigl(\half(1+a)^{2k} + \half(1-a)^{2k}\Bigr)  \Bigl(\half(1+b)^{2k} + \half(1-b)^{2k}\Bigr) \geq 1 .
\end{align*}
Now Lemma~\ref{lem:rev-goal3} follows because of the following polynomial identity (in the values of $a$ and $b$),
\begin{multline*}
  \tfrac{1}{4} \Bigl((1+a)^{2k} (1+b)^{2k} + (1-a)^{2k} (1-b)^{2k} + (1+a)^{2k} (1-b)^{2k} + (1-a)^{2k} (1+b)^{2k}\Bigr) \\
= \Bigl(\half(1+a)^{2k} + \half(1-a)^{2k}\Bigr)  \Bigl(\half(1+b)^{2k} + \half(1-b)^{2k}\Bigr).
\end{multline*}
}
Next, for clarity we make a change of variables; our task becomes showing that for real indeterminates $\mu, \nu, \alpha, \beta$,
\begin{align}
    \proves{4k} \quad
        &(\tfrac14 + \tfrac14 \rho^*) \Bigl((\mu+\alpha)^{2k} (\nu+\beta)^{2k} + (\mu-\alpha)^{2k} (\nu-\beta)^{2k}\Bigr) \nonumber\\
      +\ &(\tfrac14 - \tfrac14 \rho^*) \Bigl((\mu+\alpha)^{2k} (\nu-\beta)^{2k} + (\mu-\alpha)^{2k} (\nu+\beta)^{2k}\Bigr) - \mu^{2k}\nu^{2k} \geq 0.  \label{eqn:rev-goal1}
\end{align}

Finally, by homogeneity we can reduce the above to proving the following ``two-point inequality'':

\begin{named}{Two-Point Inequality}  Let $k \in \N^+$ and let $\rho^* = 1 - \frac{1}{2k}$.  Then
\begin{align*}
    \proves{4k} \quad P_k(a,b) \coloneqq\
        &(\tfrac14 + \tfrac14 \rho^*) \Bigl((1+a)^{2k} (1+b)^{2k} + (1-a)^{2k} (1-b)^{2k}\Bigr) \nonumber\\
      +\ &(\tfrac14 - \tfrac14 \rho^*) \Bigl((1+a)^{2k} (1-b)^{2k} + (1-a)^{2k} (1+b)^{2k}\Bigr) - 1 \geq 0.
\end{align*}
\end{named}

\begin{proof}[Proof that~\eqref{eqn:rev-goal1} follows from the Two-Point Inequality]
    Suppose we show that $P_k(a,b)$ is equal to a sum of squares, say $\sum_{i=1}^m R_i(a,b)^2$ where each $R_i(a,b)$ is a bivariate polynomial.  Viewing this as an SOS identity in~$a$ only, we deduce that $\deg_a(R_i) \leq k$  for each~$i$ since $\deg_a(P_k) \le 2k$ (here $\deg_a(R_i)$ denotes the degree of $R_i$ viewed as a univariate polynomial in $a$, and likewise $\deg_a(P_k)$); similarly, $\deg_b(R_i) \leq k$ for each~$i$. Then
    \[
        \sum_{i=1}^m (\mu^k \nu^k R_i(\tfrac{\alpha}{\mu}, \tfrac{\beta}{\nu}))^2 = \mu^{2k} \nu^{2k} \sum_{i=1}^m R_i(\tfrac{\alpha}{\mu}, \tfrac{\beta}{\nu})^2  = \mathrm{LHS}\eqref{eqn:rev-goal1},
    \]
    and in the summation each expression $\mu^k \nu^k R_i(\tfrac{\alpha}{\mu}, \tfrac{\beta}{\nu})$ is a polynomial in $\mu, \nu, \alpha, \beta$.
\end{proof}

It remains to establish the Two-Point Inequality via an SOS proof.
\begin{remark}\label{rem:rev-goal4}
    We remind the reader that there is of course a ``ZFC'' proof of the Two-Point Inequality, since it follows as a special case of the reverse hypercontractive inequality.
\end{remark}

\subsection{The Two-Point Inequality in SOS} \label{sec:main}

This section is devoted to proving the Two-Point Inequality; i.e., showing $P_k(a,b)$ is SOS. After significant trial and error,\rnote{is this okay?  I want to get across the point that we suffered :)} we were led to the crucial idea of rewriting it under the following substitutions:
\begin{gather*}
    r = a-b, \qquad
    s = a+b, \qquad
    t = ab.
\end{gather*}
We may then express
\begin{align*}
    P_k(a,b) &= -1+\left(\tfrac14 + \tfrac14 \rho^*\right) \bigl((1+t+s)^{2k} + (1+t-s)^{2k}\bigr) + \left(\tfrac14 - \tfrac14 \rho^*\right)  \bigl((1-t+r)^{2k} + (1-t-r)^{2k}\bigr) \\
    &= -1 +  \left(\tfrac12 + \tfrac12 \rho^*\right) \sum_{i=0}^k \tbinom{2k}{2i} (1+t)^{2k-2i} s^{2j} + \left(\tfrac12 -\tfrac12 \rho^*\right) \sum_{j=0}^k \tbinom{2k}{2j} (1-t)^{2k-2j} r^{2j},
\end{align*}
where we used the identity
\[
    \half\bigl((c+d)^{2k} + (c-d)^{2k}\bigr) = \sum_{i=0}^{k} \tbinom{2k}{2i} c^{2k-2i} d^{2i}.
\]
Next we use $r^2 = s^2 - 4t$ to eliminate~$r$, obtaining
\[
   P_k(a,b) = -1 + \left(\tfrac12 + \tfrac12 \rho^*\right) \sum_{i=0}^k \tbinom{2k}{2i} (1+t)^{2k-2i} s^{2i} + \left(\tfrac12 - \tfrac12 \rho^*\right) \sum_{j=0}^k \tbinom{2k}{2j} (1-t)^{2k-2j} (s^2-4t)^{j}.
\]
Now we expand $(s^2-4t)^{j}$ in the latter sum so that we can write it as an even polynomial in~$s$. We get
\begin{align*}
    \sum_{j=0}^k \tbinom{2k}{2j} (1-t)^{2k-2j} (s^2-4t)^{j}  &= \sum_{j=0}^k \tbinom{2k}{2j} (1-t)^{2k-2j} \sum_{i = 0}^j \tbinom{j}{i}s^{2i}(-4t)^{j-i} \\
    &= \sum_{i=0}^k s^{2i} \sum_{j=i}^k \tbinom{2k}{2j} (1-t)^{2k-2j} \tbinom{j}{i}(-4t)^{j-i}.
\end{align*}
Thus we have
\begin{align}
    P_k(a,b) &= -1 + \sum_{i=0}^k \left(\left(\tfrac12 + \tfrac12 \rho^*\right) \tbinom{2k}{2i}(1+t)^{2k-2i} + \left(\tfrac12 - \tfrac12 \rho^*\right)  \sum_{j=i}^k \tbinom{2k}{2j} (1-t)^{2k-2j} \tbinom{j}{i}(-4t)^{j-i} \right) s^{2i} \nonumber\\
    &= Q_{k,0}(t)+ Q_{k,1}(t)s^2 + Q_{k,2}(t)s^4 + \cdots + Q_{k,k}(t)s^{2k}, \label{eqn:Qi-sum}
\end{align}
where
\begin{gather*}
    Q_{k,0}(t) = -1 + \left(\tfrac12 + \tfrac12 \rho^*\right) (1+t)^{2k} + \left(\tfrac12 - \tfrac12 \rho^*\right) \sum_{j=0}^k \tbinom{2k}{2j} (1-t)^{2k-2j}(-4t)^j, \\
    Q_{k,i}(t) = \left(\tfrac12 + \tfrac12 \rho^*\right) \tbinom{2k}{2i}(1+t)^{2k-2i} + \left(\tfrac12 - \tfrac12 \rho^*\right)  \sum_{j=i}^k \tbinom{2k}{2j} (1-t)^{2k-2j} \tbinom{j}{i}(-4t)^{j-i}, \qquad i = 1 \dots k.
\end{gather*}

Suppose we could show that~$Q_{k,0}(t)$ and also $Q_{k,1}(t), \dots, Q_{k,k}(t)$ are nonnegative for all $t \in \R$.  Then by Fact~\ref{fact:univariate-sos} they are also SOS, and hence $P_k(a,b)$ is SOS in light of~\eqref{eqn:Qi-sum}.  This would complete the proof of the Two-Point Inequality.

In fact that is precisely what we show below, using some computer algebra assistance.  We remark, though, that is not a~priori clear that this strategy should work; i.e., that $Q_{k,0}(t), \dots, Q_{k,k}(t)$ should be nonnegative.  It does not follow from the truth of the Two-Point Inequality. To see this, observe that whereas the Two-Point Inequality is known to hold for any $0 \leq \rho \leq \rho^*$, it is \emph{not} true that $Q_{k,0}(t) \geq 0$ for all $0 \leq \rho \leq \rho^*$.\rnote{this statement doesn't actually make sense since $Q_{k,0}$ does not involve $\rho$, but I think the reader will understand what we mean}  In fact, for $k = 1$ we have
\begin{equation} \label{eqn:Q10}
    Q_{1,0}(t) = t^2 -(2-4\rho^*)t
\end{equation}
which is nonnegative for all~$t$ \emph{only} for the specific choice $\rho^* = 1-\frac{1}{2k} = \frac12$.

Nevertheless, we now complete the proof of the Two-Point Inequality by showing that $Q_{k,0}(t), \dots, Q_{k,k}(t)$ are all nonnegative.
\begin{proposition}   \label{prop:Q0}
    For each $k \in \N^+$ with $\rho^* = 1-\frac{1}{2k}$, the polynomial $Q_{k,0}(t)$ is nonnegative.
\end{proposition}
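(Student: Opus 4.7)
The plan is first to derive a compact closed form for $Q_{k,0}(t)$ by exploiting a complex substitution. Let $U_k(t), V_k(t) \in \R[t]$ be the unique polynomials satisfying $(1+i\sqrt t)^{2k} = U_k(t) + i\sqrt t\, V_k(t)$. Taking moduli squared gives the fundamental identity $U_k(t)^2 + t V_k(t)^2 = (1+t)^{2k}$, while squaring both sides and taking real parts gives $S(t) = U_k(t)^2 - t V_k(t)^2$. Substituting these into the defining expression for $Q_{k,0}$ and simplifying yields
\[
Q_{k,0}(t) \;=\; U_k(t)^2 + \rho^* \, t\, V_k(t)^2 - 1 \;=\; (1+t)^{2k} - 1 - \frac{t V_k(t)^2}{2k}.
\]

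Next, since $U_k(0)=1$ and $V_k(0)=2k$, a short calculation shows $Q_{k,0}(0) = Q_{k,0}'(0) = 0$, so $Q_{k,0}(t) = t^2 R_k(t)$ for a polynomial $R_k$ of degree $2k-2$. By Fact~\ref{fact:univariate-sos} it suffices to show $R_k(t) \geq 0$ on all of $\R$. For $t \leq 0$ this is essentially automatic: writing $t = -u^2$ and using $U_k(-u^2) = \tfrac12[(1+u)^{2k}+(1-u)^{2k}]$ and $uV_k(-u^2) = \tfrac12[(1+u)^{2k}-(1-u)^{2k}]$, one recognizes $Q_{k,0}(-u^2) = P_k(u,-u)$, which is the Two-Point Inequality evaluated at $a = u$, $b = -u$ and is already known in ZFC via the classical reverse-hypercontractive inequality. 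So the genuine challenge is confined to the range $t > 0$.

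For $t > 0$, substituting $t = \tan^2\phi$ with $\phi \in [0, \pi/2)$ gives $(1+t)^{2k} = \sec^{4k}\phi$ and $t V_k(t)^2 = \sec^{4k}\phi \, \sin^2(2k\phi)$, converting $Q_{k,0}(t) \geq 0$ into the real-analytic inequality
\[
\sin^2(2k\phi) \;\leq\; 2k\bigl(1 - \cos^{4k}\phi\bigr) \quad\text{for all } \phi \in [0, \tfrac{\pi}{2}).
\]
A Taylor expansion at $\phi=0$ shows both sides begin with $4k^2\phi^2$, with the next-order correction $\tfrac{2k^2}{3}(2k-1)(4k-1)\phi^4$ favoring the right-hand side; and for small $k$ one directly verifies that $R_1 = 2$, $R_2 = 4t^2 + 56$, and $R_3 = 6t^4 + 330t^2 - 352t + 330$ are manifestly nonnegative. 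The main obstacle is assembling a single uniform argument valid for every $k$: no clean closed-form SOS certificate is apparent, and I anticipate resorting to computer-algebra methods---Sturm's theorem to certify the absence of real zeros of $R_k$, or a semidefinite-program-based search for an explicit SOS decomposition of $R_k$---consistent with the ``computer algebra assistance'' the authors reference.
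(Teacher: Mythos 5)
Your derivation of the closed form
\[
Q_{k,0}(t) = (1+t)^{2k} - 1 - \frac{t\,V_k(t)^2}{2k},
\qquad (1+i\sqrt t)^{2k} = U_k(t) + i\sqrt t\,V_k(t),
\]
is correct and is in fact a cleaner route than the paper's, which obtains the equivalent identity $Q_{k,0}(t)=-1+(1+t)^{2k}\bigl(1-\tfrac1{4k}+\tfrac1{4k}\cos(4k\arctan\sqrt t)\bigr)$ by having Zeilberger's algorithm produce and solve a recurrence for the sum $S_k(t)$. Your $t\le 0$ reduction to the Two-Point Inequality is exactly the paper's argument, and your substitution $t=\tan^2\phi$ turning the claim into $\sin^2(2k\phi)\le 2k(1-\cos^{4k}\phi)$ is an equivalent restatement of the paper's remaining task. (Minor nit: your listed $R_1,R_2,R_3$ are $2k\cdot Q_{k,0}(t)/t^2$, not $Q_{k,0}(t)/t^2$, but the sign is unaffected.)

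The genuine gap is that you stop precisely where the work begins. You explicitly say you see no uniform argument and would ``resort to computer-algebra methods---Sturm's theorem \dots or a semidefinite-program-based search'' on a per-$k$ basis; that is not a proof of the proposition, which quantifies over all $k\in\N^+$. The paper does supply a uniform elementary argument here: after noting that $1-\tfrac1{4k}+\tfrac1{4k}\cos(\cdot)\ge 0$, it applies the Bernoulli bound $(1+t)^{2k}\ge 1+2kt$ and splits on whether $t\ge \tfrac1{2k(2k-1)}$. In the large-$t$ regime, crudely using $\cos\ge -1$ already suffices. In the small-$t$ regime, $4k\arctan\sqrt t\le 4k\sqrt t\le\sqrt 2$, where cosine is decreasing, so the explicit degree-$6$ Taylor lower bound $\cos x\ge 1-\tfrac{x^2}{2}+\tfrac{x^4}{24}-\tfrac{x^6}{720}$ applies; after substitution the claim reduces to a concrete quadratic in $t$ with negative leading coefficient, which one checks is nonnegative at the two endpoints of the interval. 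You would need to carry out some version of this (or another uniform-in-$k$ estimate for your trigonometric inequality) to actually close the proof; the Taylor expansion at $\phi=0$ you note only shows local nonnegativity near the double root and does not control the full range $\phi\in[0,\pi/2)$.
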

\begin{proof}
  For $k = 1$ we have $Q_{1,0}(t) = t^2$ (as noted in~\eqref{eqn:Q10}); henceforth we may assume $k \geq 2$.   For $t < 0$ we substitute $a = \sqrt{-t}$, $b = -\sqrt{-t}$ into~\eqref{eqn:Qi-sum}; since $s = a+b = 0$ we get $P_k(\sqrt{-t},-\sqrt{-t}) = Q_{k,0}(t)$. By Remark~\ref{rem:rev-goal4} we have $P_k(\sqrt{-t},-\sqrt{-t}) \geq 0$ and hence $Q_{k,0}(t) \geq 0$ for all $t < 0$.

  For $t\geq0$ we first rewrite
  \[
    Q_{k,0}(t) = {-1} + (1+t)^{2k}\left(1 - \frac{1}{4k} + \frac{1}{4k} \sum_{j=0}^k\tbinom{2k}{2j}\frac{(1-t)^{2k-2j}}{(1+t)^{2k\phantom{-2j}}}(-4t)^j\right).
  \]
  Denoting the sum in this expression by~$S_k(t)$, Zeilberger's algorithm~\cite{Zei90,PWZ97} finds the recurrence equation
  \[
    (t+1)^2 S_{k+2}(t) - 2 (t^2-6 t+1) S_{k+1}(t)+(t+1)^2 S_k (t) = 0,
  \]
  valid for all $k\geq0$.
  The recurrence can be obtained for example by simply typing the command
  \begin{center}
    \verb|SumTools[Hypergeometric][ZeilbergerRecurrence](|\\
    \verb|binomial(2*k,2*j)*(1-t)^(2*k-2*j)/(1+t)^(2*k)*(-4*t)^j, |\\
    \verb|k, j, 0..k);|
  \end{center}
  into the computer algebra software Maple. 

  Since the coefficients in this recurrence do not depend on~$k$ but only on~$t$, the recurrence can be solved in closed form. Together with the initial values $S_0(t)=1$ and $S_1(t)=\tfrac{t^2-6t+1}{(t+1)^2}$, it follows that
  $S_k(t)=\cos(4k\arctan(\sqrt t))$. (Not every computer algebra system may deliver the solution in this form; however,
  for the correctness of the proof it is sufficient to check that $\cos(4k\arctan(\sqrt t))$ is indeed a solution
  of the recurrence.  This is easy to verify.) 
  Hence,
  \begin{align*}
    Q_{k,0}(t) &= {-1} + (1+t)^{2k}\left(1 - \tfrac{1}{4k} + \tfrac{1}{4k} \cos(4k\arctan(\sqrt{t}))\right)\\
               &\geq {-1} + (1+2kt)\left(1 - \tfrac{1}{4k} + \tfrac{1}{4k} \cos(4k\arctan(\sqrt{t}))\right),
  \end{align*}
  using the fact that the parenthesized expression is clearly nonnegative.    We now split into two cases.

  \paragraph{Case 1:} $t \geq \frac{1}{2k(2k-1)}$.  In this case we simply use that $\cos(4k\arctan(\sqrt{t})) \geq -1$ to obtain
  \[
       Q_{k,0}(t) \geq  {-1} + (1+2kt)\left(1 - \tfrac{1}{2k}\right) = -\tfrac{1}{2k} +(2k-1)t,
  \]
  which is indeed nonnegative when $t \geq \frac{1}{2k(2k-1)}$.

  \paragraph{Case 2:} $0 \leq t \leq \frac{1}{2k(2k-1)}$.  In this case we use the following estimates:
    \[
    \arctan(\sqrt t) \leq \sqrt{t} \quad \forall t \geq 0, \qquad
    \cos(x) \geq \kappa(x) \coloneqq 1 - \tfrac 12 x^2 + \tfrac1{24} x^4 - \tfrac1{720} x^6 \quad \forall x \in \R.
    \]
    Note that $4k\arctan(\sqrt{t}) \leq 4k\sqrt{t} \leq 4k\sqrt{\frac{1}{2k(2k-1)}}$, and the latter quantity is at most~$\sqrt{2}$ for all $k \geq 2$.  Since $\cos(x)$ is decreasing for $x \in [0, \sqrt{2}]$ we have
    \begin{multline*}
        \cos(4k\arctan(\sqrt t)) \geq\cos(4k \sqrt{t})\geq \kappa(4k\sqrt{t}) \\
    \begin{aligned}
        \Rightarrow\quad   Q_{k,0}(t) &\geq {-1} + (1+2kt)\left(1 - \tfrac{1}{4k} + \tfrac{1}{4k} \kappa(4k\sqrt{t})\right)\\
        &=    -1 + (1 + 2kt) \left(1 - \tfrac{1}{4k} + \tfrac{1}{4k} \left(1 - 8k^2t + \tfrac{32}{3} k^4t^2 - \tfrac{256}{45}k^6t^3\right) \right)\\
        &= q(t) t^2, \qquad \text{where } q(t) = -\tfrac{128}{45}k^6 t^2 - \left(\tfrac{64}{45}k - \tfrac{16}{3}\right)k^4 t + \left(\tfrac{8}{3}k-4\right)k^2.
    \end{aligned}
    \end{multline*}
    It remains to show that $q(t) \geq 0$ for $0 \leq t \leq \frac{1}{2k(2k-1)}$. Since $q(t)$ is a quadratic polynomial with negative leading coefficient, we only need to check that $q(0),~q(\frac{1}{2k(2k-1)}) \geq 0$.
    We have $q(0) = \left(\tfrac{8}{3}k-4\right)k^2$, which is clearly nonnegative for $k \geq 2$.  Finally, one may check that
    \begin{align*}
        q\left(\tfrac{1}{2k(2k-1)}\right) 
        &= \tfrac{4k^2}{45(2k-1)^2}\left(19 + 136(k-2)\left((k-\tfrac{13}{34})^2+\tfrac{103}{1156}\right)\right),
    \end{align*}
    which is evidently nonnegative for $k \geq 2$.
\end{proof}

\begin{proposition}               \label{prop:mathoverflow}
    For all  $1 \leq i \leq k \in \N^+$,  the polynomial $Q_{k,i}(t)$ is nonnegative  (with $\rho^* = 1-\frac{1}{2k}$).
\end{proposition}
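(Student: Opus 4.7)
Since $Q_{k,i}(t)$ is a univariate polynomial, Fact~\ref{fact:univariate-sos} reduces the task to showing $Q_{k,i}(t) \geq 0$ for all $t \in \R$. The natural approach is to split on the sign of~$t$.

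The case $t \leq 0$ is immediate from inspection. The first term $(1 - \tfrac{1}{4k})\binom{2k}{2i}(1+t)^{2k-2i}$ is nonnegative since $2k-2i$ is even. For the second term, every summand in
\[
    C_i(t) = \sum_{j=i}^k \binom{2k}{2j}\binom{j}{i}(1-t)^{2k-2j}(-4t)^{j-i}
\]
is nonnegative when $t \leq 0$, since $1-t \geq 1 > 0$ and $-4t \geq 0$. Hence $Q_{k,i}(t) \geq 0$ with both parts individually nonnegative.

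The substantive work is in the range $t \geq 0$, where $C_i(t)$ has alternating signs and one must exhibit enough cancellation against the (positive) first term. The plan is to adapt the Chebyshev-polynomial technique used for $Q_{k,0}$ in Proposition~\ref{prop:Q0}. Writing $C_i(t) = [v^i]\,\tilde F(v)$ with $\tilde F(v) = \sum_j \binom{2k}{2j}(1-t)^{2k-2j}(v - 4t)^j$, one recognizes
\[
    \tilde F(v) \;=\; \bigl((1+t)^2 - v\bigr)^k \, T_{2k}\!\left(\frac{1-t}{\sqrt{(1+t)^2 - v}}\right),
\]
where $T_{2k}$ is the degree-$2k$ Chebyshev polynomial of the first kind. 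Taylor-expanding $\tilde F$ in $v$ around $v=0$ and using the identity $T_{2k}((1-t)/(1+t)) = \cos(4k\arctan\sqrt t)$ established in the proof of Proposition~\ref{prop:Q0} yields a closed-form expression for $C_i(t)/(1+t)^{2k-2i}$ as a trigonometric polynomial in $\theta = 4k\arctan\sqrt t$, generalizing the $i=0$ identity $S_k(t) = \cos\theta$. With such a closed form in hand, one splits $[0, \infty)$ into subintervals and bounds the resulting trigonometric expression using the standard estimates ($|\cos|, |\sin|$ bounds and truncated cosine Taylor series) together with Zeilberger's algorithm to certify the required binomial identities, exactly in the style of Proposition~\ref{prop:Q0}.

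The hardest part is expected to be the $i$-dependent structure of the higher Chebyshev derivatives arising in the Taylor expansion: the closed forms and their trigonometric bounds must be handled in a way that is uniform in~$i$, whereas the $i = 0$ case in Proposition~\ref{prop:Q0} only involved $T_{2k}$ itself at $v=0$. A useful structural simplification is that each $Q_{k,i}(t)$ is palindromic of degree $2(k-i)$: a short manipulation shows $t^{2k-2i}\, C_i(1/t) = C_i(t)$, and the first term of $Q_{k,i}$ is manifestly reciprocal. Consequently $Q_{k,i}(t) = t^{k-i}\, \tilde Q_{k,i}(t + 1/t)$ for some polynomial $\tilde Q_{k,i}$ of degree $k-i$, reducing the problem to verifying appropriate sign conditions for $\tilde Q_{k,i}(y)$ on $|y| \geq 2$ (with the sign on $y \leq -2$ governed by the parity of $k-i$), cutting the effective degree of the analysis in half.
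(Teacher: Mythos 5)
Your plan takes a genuinely different route from the paper's proof, and it has a real gap. The paper first strengthens the claim by setting $\rho^*$ to $0$: it defines
\[
    \wt{Q}_{k,i}(t) \coloneqq \tfrac12\tbinom{2k}{2i}(1+t)^{2k-2i} + \tfrac12\sum_{j=i}^k\tbinom{2k}{2j}(1-t)^{2k-2j}\tbinom{j}{i}(-4t)^{j-i}
\]
and shows $\wt{Q}_{k,i}(t)\geq 0$, which is strictly stronger since $Q_{k,i}$ is a convex combination of the same two summands putting \emph{more} weight on the manifestly nonnegative one. It then exhibits the two-index recurrence
\[
    (1+i)(1+k)\wt{Q}_{k+2,i+1}(t) = (1+i)(2+k)(1+t)^2\,\wt{Q}_{k+1,i+1}(t) + (2+k)(2+2k-i)\,\wt{Q}_{k+1,i}(t),
\]
whose coefficients are nonnegative polynomials in $t$, so nonnegativity propagates by induction from the two boundary families $\wt{Q}_{k,k}=1$ and $\wt{Q}_{k,0}(t)=\tfrac12(1+t)^{2k}\bigl(1+\cos(4k\arctan\sqrt t)\bigr)\geq 0$, the latter a byproduct of Proposition~\ref{prop:Q0}. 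The recurrence handles all $t$ (positive and negative) simultaneously, so no sign-splitting or Taylor estimates of $\cos$ are needed for $i\geq 1$.

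Your observations are correct and not without value: the $t\leq 0$ case is indeed immediate, the generating-function identity $\tilde F(v)=\bigl((1+t)^2-v\bigr)^k\,T_{2k}\bigl((1-t)/\sqrt{(1+t)^2-v}\bigr)$ with $C_i(t)=[v^i]\tilde F(v)$ checks out, and the palindromy $t^{2k-2i}C_i(1/t)=C_i(t)$ is a true degree-halving structural fact. But the substantive half --- actually controlling $[v^i]\tilde F(v)$ for $t>0$ uniformly in $i$ --- is stated as a hope, not an argument, and you flag this yourself. It is far from clear that the interval-splitting and truncated-cosine estimates that worked for the single function $\cos(4k\arctan\sqrt t)$ extend to the $i$-th Taylor coefficients, which involve derivatives of $T_{2k}$ and produce increasingly intricate trigonometric polynomials in $\theta = 4k\arctan\sqrt t$ with $t$-dependent prefactors; there is no bound given, and no reason offered that the cancellation survives. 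As written, the proof does not go through for any $i\geq 1$. If you want to pursue this route you would need to derive the closed forms for $[v^i]\tilde F(v)$ explicitly and prove the sign conditions; the paper's recurrence (found by computer search, a rather different flavor of computer assistance than the trigonometric identity route) sidesteps all of this.
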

\begin{proof}
In fact, we will prove the stronger claim that each $Q_{k,i}(t)$ is nonnegative even when~$\rho^*$ is set to~$0$.  I.e., we will show that
\[
    \wt{Q}_{k, i}(t) \coloneqq \tfrac{1}{2}\tbinom{2k}{2i}(1+t)^{2k-2i} + \tfrac{1}{2} \sum_{j=i}^k \tbinom{2k}{2j} (1-t)^{2k-2j} \tbinom{j}{i}(-4t)^{j-i}
\]
is nonnegative. To see that this is indeed stronger, simply note that $Q_{k,i}(t)$ and $\wt{Q}_{k,i}(t)$ are convex combinations of the same two main quantities, but $\wt{Q}_{k,i}(t)$ has less of its ``weight'' on the first quantity $\tbinom{2k}{2i}(1+t)^{2k-2i}$, which is clearly nonnegative. We will furthermore show that even $\wt{Q}_{k,0}(t) \geq 0$.

This is not particularly easy to prove by hand, but using computer assistance yields a compact proof.
Using automated guessing~\cite{kauers09a} one can discover that the following recurrence seems to hold for all integers $0\leq i\leq k$:
\[
    (1+i)(1+k)\wt{Q}_{k+2, i+1}(t) = (1+i)(2+k)(1+t)^2 \wt{Q}_{k+1, i+1}(t) + (2+k)(2+2k-i) \wt{Q}_{k+1,i}(t).
\]
The correctness of this recurrence was shown by computing an ideal of annihilating operators for $\wt{Q}_{k,i}(t)$
from the sum definition using creative telescoping and holonomic closure properties, and then showing by a Gr\"obner
basis computation that the guessed recurrence belongs to this ideal. A detailed description of this calculation
would lead a bit to far, the interested reader is referred to \cite{koutschan13,kauers13,kauers14c} for recent introductions
to the relevant computational techniques and the algebraic theory behind them.

In light of the guessed-and-proved recurrence for $\wt Q_{k,i}(t)$ we only need to prove $\tilde{Q}_{k, i}(t) \geq 0$ for the cases that $k = i$ and $i = 0$; the nonnegativity of $\wt{Q}_{k, i}(t)$ for general~$k$ and~$i$ then follows by induction. For $k= i$ we have $\wt{Q}_{k, k}(t) = 1 \geq 0$. For $i= 0$ the proof of nonnegativity is similar to, but easier than, that of Proposition~\ref{prop:Q0}. For $t < 0$ it's obvious from its definition that $\tilde{Q}_{k, 0}(t)$ is nonnegative. For $t \geq 0$, the proof of Proposition~\ref{prop:Q0} gives
\[
    \wt{Q}_{k, 0}(t) = \tfrac{1}{2} (1+t)^{2k}\left(1 +  \cos(4k \arctan(\sqrt{t}))\right) \geq 0. \qedhere
\]
\end{proof}

\section{The Frankl-R\"odl Theorem in SOS}         \label{sec:BHM}

The applications of our work to $3$-Coloring and Vertex-Cover follow by giving a low-degree SOS proof of the Frankl--R\"odl Theorem, that $\maxis(\FR{n}{\gamma}) < o(1)$.  (See Section~\ref{sec:intro-application} for the definition of the Frankl--R\"odl graphs and the statement of the Frankl--R\"odl Theorem.)  More precisely, in this section we will prove the following:
\begin{theorem}                                     \label{thm:main}
    Let $n \in \N^+$ and let $\frac{1}{\log n} \leq \gamma \leq \frac14$ be such that $(1-\gamma)n$ is an even integer.  Given the Frankl--R\"odl graph $\FR{n}{\gamma} = (V,E)$, for each $x \in V = \bn$ let $f(x)$ be an indeterminate. Then there is a degree-$\left(4 \lceil \frac{1}{4\gamma}\rceil\right)$  SOS refutation of $\maxis(G) \geq \Omega(n^{-\gamma/10})$; i.e.,
    \begin{multline*}
        \{ f(x)^2 = f(x)\ \forall x \in V, \ \ f(x) f(y) = 0\ \forall (x, y) \in E, \ \  \tfrac{1}{|V|} \littlesum_{x \in V} f(x) \geq C n^{-\gamma/10} \}
         \ \ \proves{4 \lceil \frac{1}{4\gamma}\rceil}\ \ -1 \geq 0
    \end{multline*}
    for a universal constant $C$.
\end{theorem}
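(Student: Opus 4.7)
My plan is to follow an SOS adaptation of the Benabbas--Hatami--Magen (BHM) density Frankl--R\"odl argument, using Theorem~\ref{thm:our-SOS-reverse} as the core hypercontractive ingredient. Set $\alpha := \frac{1}{|V|}\sum_{x \in V} f(x)$ (so the hypothesis reads $\alpha \geq C n^{-\gamma/10}$) and $k := \lceil 1/(4\gamma)\rceil = \Theta(1/\gamma)$; with this choice $1-2\gamma \leq 1-\frac{1}{2k}$, which will let us feed $\rho = 1-2\gamma$ into the reverse hypercontractive inequality. Throughout, I use that $f(x)^2 = f(x)$ gives the polynomial identity $f(x)^{2k} = f(x)$ modulo the system.

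The first step is to derive an SOS lower bound on $\Ex[f(\bx) f(\by)]$ for suitably correlated $(\bx,\by)$. I apply Theorem~\ref{thm:our-SOS-reverse} with the substitution $g(y) := f(-y)$, a legal renaming of indeterminates since $-y \in \bn$. Using the polynomial identity $\Ex[f(-\bx)] = \Ex[f(\bx)] = \alpha$ (by symmetry of the uniform measure under negation) and the measure-preserving change of variable $\by \mapsto -\by$ (which maps the $\rho$-correlated distribution to the $(-\rho)$-correlated one), together with $f^{2k} = f$, the conclusion of Theorem~\ref{thm:our-SOS-reverse} rewrites as
\[
    \proves{4k}\quad \Ex_{(\bx, \by) \text{ $(2\gamma-1)$-corr'd}}[f(\bx) f(\by)] \geq \alpha^{4k}.
\]

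The second step is to derive an SOS upper bound on the same quantity that is incompatible with $\alpha^{4k}$. Expanding the left-hand side as $\sum_{d} \binom{n}{d}(1-\gamma)^d \gamma^{n-d}\cdot \Ex_{d(\bx,\by)=d}[f(\bx) f(\by)]$, the edge constraints $f(x)f(y) = 0$ annihilate exactly the slice at $d = (1-\gamma)n$---but that slice carries only $\Theta(1/\sqrt{n\gamma})$ of the binomial mass. A naive per-term bound $f(x)f(y) \leq f(x)$ on the remaining slices gives only $\alpha \cdot (1 - \Theta(1/\sqrt{n\gamma}))$, hopelessly weak next to the target $\alpha \cdot n^{-\Omega(1)}$. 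I plan to translate BHM's refinement into SOS: exploit the vertex-transitivity of $\FR{n}{\gamma}$ to replace distance-$d$ averages by averages symmetric under the relevant group action, and apply further reverse-hypercontractive estimates to ``restricted'' versions of $f$ so as to serve as an SOS surrogate for the concentration/conditioning-on-distance moves that BHM use analytically. Done carefully, this should produce an SOS upper bound of roughly $\alpha \cdot n^{-\Omega(1)}$, which combined with the lower bound from Step~1 and the hypothesis $\alpha \geq C n^{-\gamma/10}$ forces a numerical contradiction in degree $O(1/\gamma)$.

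The main obstacle is exactly this SOS-localization: reverse hypercontractivity inherently averages over all Hamming distances via the $\mathrm{Bin}(n, 1-\gamma)$ weights, while the edge constraints zero out only the one slice at the mean. Bridging this gap analytically is cheap (condition on distance, or use concentration), but in SOS one must replace those moves by polynomial multipliers, symmetrizations, and auxiliary applications of Theorem~\ref{thm:our-SOS-reverse}---all while keeping the total degree at $O(1/\gamma)$, since each extra invocation of reverse hypercontractivity contributes $4k = O(1/\gamma)$ more. Controlling this degree budget is the delicate piece; everything else is routine polynomial bookkeeping once the right SOS surrogate for the BHM localization step is in hand.
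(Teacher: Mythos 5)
Your Step 1 is essentially right: the negation trick $g(y) := f(-y)$ turns $\rho$-correlation into $(-\rho)$-correlation, and with $f^{2k}=f$ from the Booleanity constraint, Theorem~\ref{thm:our-SOS-reverse} delivers the lower bound. But the proof then stalls exactly where you say it does, and the avenues you sketch for finishing --- vertex-transitivity, symmetrization, ``auxiliary applications'' of reverse hypercontractivity --- are not what the paper uses and would not obviously close the gap (and extra hypercontractivity calls would blow the degree budget, as you note).

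The two ingredients you are missing are already on the table in the paper. First, the parity split: decompose $f = f_0 + f_1$ by Hamming-weight parity and replace the single slice operator $S_d$ by $S'_d = \tfrac12(S_d + S_{d+1})$. The point is that $\langle f_i, S_{d+1} f_i\rangle = 0$ \emph{formally} (distance $d+1$ is odd, so it connects opposite parities), and $\langle f_i, S_d f_i\rangle = 0$ modulo the edge constraints $f(x)f(y)=0$; hence $\langle f_0, S'_d f_0\rangle + \langle f_1, S'_d f_1\rangle = 0$ at degree $2$. Second, the BHM operator comparison (Theorem~\ref{thm:op-norm}): $\langle f, S'_d f\rangle - \langle f, T_{\rho'} f\rangle = \sum_U \wh f(U)^2\,\delta(U)$ with $|\delta(U)| = O(n^{-1/5})$. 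This is a \emph{diagonal} quadratic form in the Fourier coefficients, so it is a degree-$2$ SOS-friendly inequality and immediately gives $\langle f_0, T_{\rho'} f_0\rangle + \langle f_1, T_{\rho'} f_1\rangle \leq \delta\,\E[f^2]$ without any concentration-on-distance or symmetrization argument. You then apply Theorem~\ref{thm:our-SOS-reverse} separately to each $f_i$ (with $g_i(x) = f_i(-x)$), use $\E[f_i]=\E[g_i]$, and recombine with Lemma~\ref{lem:super-CS} and Lemma~\ref{lem:powers}. A related subtle point: applying reverse hypercontractivity to $f$ as a whole (as in your Step 1) produces $\langle f, T_{\rho'} f\rangle$, which includes cross-terms $\langle f_0, T_{\rho'} f_1\rangle$ that the edge constraints do \emph{not} control; you need to apply the lower bound to each parity class separately so that the two sides of the argument actually match.
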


In particular, this theorem shows that the degree-$4$ SOS/Lasserre algorithm certifies that $\maxis(\FR{n}{1/4}) < o(1)$, which is a stronger statement than the chromatic number bound $\chi(\FR{n}{1/4}) = \omega(1)$.  More generally, it shows that the $N^{O(1/\gamma)}$-time SOS/Lasserre hierarchy algorithm certifies that $\maxis(\FR{n}{\gamma}) \leq O(n^{-\gamma/10})$; in other words, that $\minvc(\FR{n}{\gamma}) \geq (1-O(n^{-\gamma/10}))N = (1-O(n^{-\gamma/10})) 2^n$.  Note that this bound is nontrivial only for $\gamma \gg \frac{1}{\log n}$; the reason for this dependence on $\gamma$ will be seen shortly.

\medskip

We prove Theorem~\ref{thm:main} by ``SOS-izing'' the Benabbas--Hatami--Magen Fourier-theoretic proof~\cite{BHM12} of the following ``density'' version of the Frankl--R\"odl Theorem:
\begin{theorem}                                     \label{thm:bhm}
    (\cite{BHM12})  Fix $0 < \gamma < 1/2$ and $0 < \alpha \leq 1$.  In the graph $\FR{n}{\gamma} = (V,E)$, if $S \subseteq V$ has $|S|/2^n \geq \alpha$ then
    \[
        \Pr_{(\bx, \by) \sim E}[\bx \in S, \by \in S] \geq 2(\alpha/2)^{1/\gamma} - o_n(1).
    \]
\end{theorem}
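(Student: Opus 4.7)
The plan is to follow the Fourier-analytic strategy of Benabbas--Hatami--Magen, which reduces the density Frankl--R\"odl statement to the reverse hypercontractive inequality (Theorem~\ref{thm:morss}).

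The first step is an antipodal flip of one endpoint. Setting $\bz = -\by$, an edge pair with $d(\bx,\by) = (1-\gamma)n$ becomes a pair with $d(\bx,\bz) = \gamma n$, and the event $\by \in S$ becomes $\bz \in T$ for $T = -S$, which still has density $\alpha$. Thus
\[
\Pr_{(\bx,\by)\sim E}[\bx\in S,\,\by\in S] \;=\; \Pr_{(\bx,\bz):\,d(\bx,\bz)=\gamma n}[\bx\in S,\,\bz\in T].
\]

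The second step is to replace the ``exact slice'' distribution $d(\bx,\bz)=\gamma n$ by the smoothed $(1-2\gamma)$-correlated distribution, under which $d(\bx,\bz)$ is Binomial$(n,\gamma)$ and is sharply concentrated at $\gamma n$. Applying Theorem~\ref{thm:morss} with $\rho = 1-2\gamma$ and $q = 1-\rho = 2\gamma$ then yields
\[
\E_{(1-2\gamma)\text{-corr'd}}\bigl[1_S(\bx)\,1_T(\bz)\bigr] \;\geq\; \|1_S\|_q \|1_T\|_q \;=\; \alpha^{2/q} \;=\; \alpha^{1/\gamma}.
\]

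The third and hardest step is to transfer this averaged lower bound (over all Hamming weights, weighted by Binomial$(n,\gamma)$) into a pointwise lower bound on the conditional probability at the specific weight $\gamma n$. Naively, upper bounding a conditional by an average only controls the \emph{max} over slices, whereas we want a lower bound at a \emph{specific} slice. To extract the specific slice, one uses the spectral structure of the Frankl--R\"odl edge-averaging operator, whose eigenvalues on degree-$d$ Fourier characters are Krawtchouk values that track $(1-2\gamma)^d$ closely on low-degree levels; this matches the noise operator's spectrum, so the difference between the slice and its smoothed version is governed by concentration and contributes only an $o_n(1)$ additive loss. The factor $2(\alpha/2)^{1/\gamma}$ (as opposed to the cleaner $\alpha^{1/\gamma}$) arises from splitting $S$ into two halves of density at least $\alpha/2$ apiece (for instance along the parity coordinate, exploiting that the Frankl--R\"odl graph is bipartite in parity whenever $(1-\gamma)n$ is even), applying the reverse-hypercon bound to each half separately, and recombining.

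The main obstacle is Step~3: the deaveraging from a Binomial mixture over Hamming distances to a single slice, which is why one must match Fourier spectra rather than argue combinatorially; this is also where all of the slack (the constant loss $2$, the worsened base $\alpha/2$, and the $o_n(1)$) is incurred, and where the subsequent SOS-ification in Section~\ref{sec:BHM} will need to be engineered with care.
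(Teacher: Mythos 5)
The paper does not prove Theorem~\ref{thm:bhm} itself --- it cites it from~\cite{BHM12} and then SOS-izes the argument in Theorem~\ref{thm:main} --- but the ingredients of the BHM proof are laid out in Section~\ref{sec:BHM}, which lets one check your proposal against them. Your high-level blueprint (antipodal flip, reverse hypercontractivity via Theorem~\ref{thm:morss}, a spectral comparison to extract the exact slice, and a parity split producing $2(\alpha/2)^{1/\gamma}$) is the right one.

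There is, however, a genuine gap in your Step~3. You frame the passage from the $\rho$-correlated (binomial) mixture over Hamming distances to the exact slice at distance $\gamma n$ as a concentration phenomenon, noting that the Krawtchouk eigenvalues of the slice operator track $(1-2\gamma)^{\ell}$ ``on low-degree levels.'' That qualifier is where the argument breaks: one needs the spectra of the slice operator and of $T_{1-2\gamma}$ to agree to within $o_n(1)$ on \emph{every} Fourier level $\ell$, and for the bare slice operator $S_d$ (averaging at exact distance $d$) they do not. At level $\ell=n$ the eigenvalue of $S_d$ is exactly $(-1)^d=\pm1$ whereas $T_{1-2\gamma}$ has eigenvalue $(1-2\gamma)^n\approx 0$; this $\Theta(1)$ gap is not repaired by binomial concentration. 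This is precisely the ``parity issue'' the paper flags before Theorem~\ref{thm:op-norm}. The fix in BHM is to compare $T_\rho$ not with $S_d$ but with the \emph{smoothed} operator $S'_d=\tfrac12 S_d+\tfrac12 S_{d+1}$; averaging over two consecutive distances cancels the high-level Krawtchouk oscillation (e.g.\ the level-$n$ eigenvalue of $S'_d$ is $\tfrac12((-1)^d+(-1)^{d+1})=0$), and only then does the uniform $o_n(1)$ eigenvalue bound of Theorem~\ref{thm:op-norm} hold. Consequently the parity split of $S$ is not merely a bookkeeping step to produce the factor~$2$: it is what makes $\la f_i, S'_d f_i\ra$ encode the Frankl--R\"odl constraint, since for parity-homogeneous $f_i$ the odd-distance piece $\la f_i, S_{d+1} f_i\ra$ vanishes identically. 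A further minor imprecision: the two parity halves need not each have density at least $\alpha/2$; one only knows $\alpha_0+\alpha_1\geq\alpha$, and $\alpha_0^{1/\gamma}+\alpha_1^{1/\gamma}\geq 2(\alpha/2)^{1/\gamma}$ follows by convexity of $t\mapsto t^{1/\gamma}$.
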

Here the $o_n(1)$ goes to $0$ rather slowly in~$n$, which means that the Benabbas--Hatami--Magen proof only recovers the Frankl--R\"odl Theorem for $\gamma > \omega(\frac{1}{\log n})$.  This is due to comparison between the $T_d$ and $S_d$ operators described below; it seems possible that some additional technical work would allow for smaller values of~$\gamma$.

\subsection{The Benabbas--Hatami--Magen argument in SOS}

Benabbas, Hatami, and Magen~\cite{BHM12} introduce the following operator:
\begin{definition}
    For integer $0 \leq d \leq n$ the operator $S_d$ is defined on functions $f \btR$ by $S_d f(x) = \Ex_{\by}[f(\by)]$,
    where $\by$ is chosen uniformly at random subject to $\hamdist(x,\by) = d$.
\end{definition}

The key technical contribution of~\cite{BHM12} is showing how to pass between the $S_d$ operators (which are relevant for Frankl--R\"odl analysis) and the $T_\rho$ operators (for which we have reverse hypercontractivity).  Intuitively, the operators $S_d$ and $T_{1-2d/n}$ should be similar (at least if  $d/n$ is bounded away from~$0$ and~$1$). However there is one caveat: ``parity'' issues with $S_d$.  For example, if $f \co \bn \to \{0,1\}$ is the indicator of the strings of even Hamming weight, then
\[
    \la f, S_d f \ra = \begin{cases} 0 & \text{if $d$ is odd,} \\ \half & \text{if $d$ is even;} \end{cases} \qquad \text{but, } \la f, T_{1-2d/n} f\ra \approx \tfrac14 \text{ for $d$ odd or even}.
\]
Benabbas, Hatami, and Magen evade this parity issue by considering the operator $\half S_{d} + \half S_{d+1}$.
\begin{definition}
    For integer $0 \leq d < n$, we define the operator $S_d' = \half S_{d} + \half S_{d+1}$.
\end{definition}
The crucial theorem in~\cite{BHM12}'s work is the following:\rnote{does it really follow?  a lot of small details to double-check}
\begin{theorem}                                     \label{thm:op-norm}
    (Follows from Lemma 3.4 in~\cite{BHM12}.)  Let $f \btR$.  Let $d = n - c$ for some integer $e^2\sqrt{n} \leq c \leq n/2$.  Then for $\rho = 1-2d/n$,
    \[
        \la f, S'_d f \ra - \la f, T_\rho f \ra = \sum_{U \subseteq [n]} \wh{f}(U)^2 \cdot \delta(U),
    \]
    where each real number $\delta(U)$ satisfies
    \[
        |\delta(U)| \leq O(\max\{n^{-1/5}, \tfrac{n}{c^2}\log^2(\tfrac{c^2}{n})\}).
    \]
\end{theorem}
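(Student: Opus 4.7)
The plan is to diagonalize all operators in the Fourier basis, identify $\delta(U)$ as an explicit eigenvalue difference, and then appeal to Lemma~3.4 of~\cite{BHM12} for the quantitative estimate. Both $T_\rho$ and $S_d$ are symmetric convolution operators on $\bn$, hence simultaneously diagonalized by the characters $\{\chi_U\}_{U \subseteq [n]}$: one has $T_\rho \chi_U = \rho^{|U|}\chi_U$ and $S_d \chi_U = \lambda_d(|U|)\chi_U$, where $\lambda_d(k) = \E[(-1)^B]$ with $B$ the (hypergeometric) number of elements of a fixed $k$-subset of $[n]$ landing in a uniformly random $d$-subset (so $\lambda_d(k)$ is the Krawtchouk polynomial up to normalization). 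Plancherel then gives $\langle f, S'_d f\rangle - \langle f, T_\rho f\rangle = \sum_U \wh{f}(U)^2 \delta(U)$ with
\[
    \delta(U) \;=\; \tfrac12\lambda_d(|U|) + \tfrac12\lambda_{d+1}(|U|) - \rho^{|U|},
\]
so everything reduces to a uniform bound on $|\delta(U)|$ as a function of $|U|$.

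Next I would exploit the ``global negation'' symmetry of $\bn$: if $\by$ is at Hamming distance $d$ from $x$, then $-\by$ is at distance $c := n-d$, and $\chi_U(-\by) = (-1)^{|U|}\chi_U(\by)$, yielding the identity $\lambda_d(k) = (-1)^k \lambda_c(k)$. Writing $\rho_c := 1 - 2c/n$ (so that $\rho = -\rho_c$), the expression becomes
\[
    \delta(U) \;=\; (-1)^{|U|}\Bigl(\tfrac12\lambda_c(|U|) + \tfrac12\lambda_{c-1}(|U|) - \rho_c^{|U|}\Bigr).
\]
Each individual $\lambda_c(k) = \E[(-1)^B]$ (with $B$ hypergeometric of parameters $(n,k,c)$) approximates the binomial expectation $\rho_c^k = \E[(-1)^{B'}]$ (with $B' \sim \mathrm{Binomial}(k, c/n)$), but only up to a parity-oscillating term of size $\Theta(1/c)$; the averaging of $\lambda_c$ with $\lambda_{c-1}$ is exactly what cancels this oscillation to leading order, mirroring the parity-function example discussed just before the theorem.

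Finally I would invoke Lemma~3.4 of~\cite{BHM12}, a sharp Krawtchouk-to-binomial estimate valid in the regime $e^2\sqrt{n} \leq c \leq n/2$, which delivers the uniform-in-$k$ bound $|\tfrac12\lambda_c(k) + \tfrac12\lambda_{c-1}(k) - \rho_c^k| \leq O(\max\{n^{-1/5},\, (n/c^2)\log^2(c^2/n)\})$; combined with the previous display this proves the theorem. The main obstacle is the Krawtchouk estimate itself, requiring a delicate saddle-point / generating-function analysis of $\lambda_c(k)$ uniform across $k \in \{0, \dots, n\}$, which we simply borrow from~\cite{BHM12}. A minor bookkeeping point is to check that the $O(1/n)$ shift in the ``effective $\rho$'' when replacing $d$ by $d+1$ is harmlessly absorbed into the stated error, which follows from the fact that both $\lambda_c$ and $\lambda_{c-1}$ are being compared against the single target $\rho_c^k$ in BHM's formulation.
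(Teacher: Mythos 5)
Your proposal is correct and follows the paper's route exactly: the paper offers no independent proof, simply noting that the theorem follows from Lemma~3.4 of Benabbas--Hatami--Magen, and your Fourier-diagonalization reduction (identifying $\delta(U) = \tfrac12\lambda_d(|U|) + \tfrac12\lambda_{d+1}(|U|) - \rho^{|U|}$ via Plancherel and the negation symmetry $\lambda_d(k)=(-1)^k\lambda_c(k)$) followed by the appeal to that lemma for the Krawtchouk-vs-binomial estimate is precisely the deduction being invoked. One small imprecision: the unaveraged parity-oscillating discrepancy in $\lambda_c(k)$ is not uniformly $\Theta(1/c)$ (for $k=n$ it is $\Theta(1)$, since $\lambda_c(n)=(-1)^c$), but the averaging cancels it and the quantitative content is in any case deferred to the BHM lemma, so this does not affect the argument.
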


Given  Theorem~\ref{thm:op-norm}, Benabbas, Hatami, and Magen are able to deduce their main Theorem~\ref{thm:bhm} from the reverse hypercontractivity result Theorem~\ref{thm:morss} without too much trouble.  We now show that this deduction can also be carried out in the SOS proof system.  Specifically, we give here the proof of our Theorem~\ref{thm:main}, relying on the SOS proof of hypercontractivity (Theorem~\ref{thm:our-SOS-reverse}) from Section~\ref{sec:reverse-hypercon}.

\begin{proof}[Proof of Theorem~\ref{thm:main}]
    Write $d = (1-\gamma)n$ (where $\frac{1}{\log n} \leq \gamma \leq \frac14$) and write $\rho' = 1 - 2d/n = -(1-2\gamma)$. For $i = 0, 1$ let us denote
    \[
        f_i(x) = \begin{cases}
                    f(x) & \text{if $x$'s Hamming weight equals $i$ mod 2,} \\ 0 & \text{else.}
                 \end{cases}
    \]
    We have
    \[
        \{ f(x) f(y) = 0\ \forall \hamdist(x, y) = d \} \quad\proves{2}\quad \la f_0, S'_d f_0 \ra + \la f_1, S'_d f_1 \ra = 0
    \]
    because if $x$'s Hamming weight has the same parity as $y$'s then their distance can only be~$d$ (an even integer) not $d+1$ (an odd one).\rnote{technically, am using some basic unspoken SOS maneuvers re equality here}  Using Theorem~\ref{thm:op-norm} it follows that
    \begin{multline*}
         \{ f(x) f(y) = 0\ \forall \hamdist(x, y) = d \} \quad\proves{2} \\
         \la f_0, T_{\rho'} f_0 \ra + \la f_1, T_{\rho'} f_1 \ra \leq \delta \littlesum_{U} \wh{f_0}(U)^2 +  \delta \littlesum_{U} \wh{f_1}(U)^2 = \delta (\E[f_0^2] +\E[f_1^2]) = \delta \E[f^2],
    \end{multline*}
    where
    \[
        \delta = O(\max\{n^{-1/5}, \tfrac{1}{\gamma^2 n}\log^2(\gamma^2 n)\}) = O(n^{-1/5})
    \]
    (with the second bound using $\gamma \geq \frac{1}{\log n}$.)    We now write $g_i(x)$ to denote $f_i(-x)$ and also $\rho = -\rho' = 1-2\gamma$; then
    \[
        \la f_i, T_{\rho'} f_i \ra = \la f_i, T_{\rho} g_i \ra = \Ex_{\substack{(\bx,\by)\\ \rho\text{-corr'd}}}[f_i(\bx) g_i(\by)]
    \]
    so we conclude
    \[
         \{ f(x) f(y) = 0\ \forall \hamdist(x, y) = d \} \quad\proves{2}\quad
         \Ex_{\substack{(\bx,\by)\\ \rho\text{-corr'd}}}[f_0(\bx)g_0(\by)]  + \Ex_{\substack{(\bx,\by)\\ \rho\text{-corr'd}}}[f_1(\bx)g_1(\by)] \leq \delta \E[f^2].
    \]
    Next, define
    \[
        k = \left\lceil \tfrac1{4\gamma} \right\rceil \geq 1.
    \]
    We have $f(x)^2 = f(x) \proves{2k} f(x)^{2k} = f(x)$, from which we may easily deduce
    \begin{multline*}
         \{ f(x)^2 = f(x)\ \forall x, \quad f(x) f(y) = 0\ \forall \hamdist(x, y) = d \} \\
         \proves{4k}\quad
         \Ex_{\substack{(\bx,\by)\\ \rho\text{-corr'd}}}[f_0(\bx)^{2k}g_0(\by)^{2k}]  + \Ex_{\substack{(\bx,\by)\\ \rho\text{-corr'd}}}[f_1(\bx)^{2k}g_1(\by)^{2k}] \leq \delta \E[f].
    \end{multline*}
    Since $\rho =1-2\gamma \leq 1-\frac{1}{2k}$ we may apply our reverse hypercontractivity result Theorem~\ref{thm:our-SOS-reverse} (in Section~\ref{sec:reverse-hypercon}) to deduce
    \begin{multline*}
         \{ f(x)^2 = f(x)\ \forall x, \quad f(x) f(y) = 0\ \forall \hamdist(x, y) = d \} \quad       \proves{4k}\quad
         \E[f_0]^{2k}\E[g_0]^{2k} + \E[f_1]^{2k}\E[g_1]^{2k} \leq \delta \E[f].
    \end{multline*}
    We're now almost done.  First, $\E[f_i] = \E[g_i]$ formally for
    each $i = 0, 1$.  Second, for simplicity we use the bound
    \[
        \{ f(x)^2 = f(x)\ \forall x\} \quad\proves{2}{\quad} \delta \E[f] = \delta \E[2f-f^2] = \delta \E[1 -(1-f)^2] \leq \delta.
    \]
    Thus we have
    \begin{align*}
         \{ f(x)^2 = f(x)\ \forall x, \quad f(x) f(y) = 0\ \forall \hamdist(x, y) = d \} \quad
         \proves{4k}\quad
         \delta &\geq \E[f_0]^{4k} + \E[f_1]^{4k} \\
                &\geq 2\left(\frac{\E[f_0] + \E[f_1]}{2}\right)^{4k} \\
                &= 2(\E[f]/2)^{4k} \\
         \Rightarrow \quad &2^{4k-1} \delta \geq \E[f]^{4k},
    \end{align*}
    where the second inequality is Lemma~\ref{lem:super-CS}.  Finally, from Lemma~\ref{lem:powers} we may deduce
    \[
        \E[f] \geq C n^{-\gamma/10} \quad \proves{4k} \quad  \E[f]^{4k} \geq C^{4k} n^{-4k\gamma/10} = C^{4k} n^{-(2/5)\lceil\frac{1}{4\gamma}\rceil\gamma} \geq C^{4k} n^{-1/5} \geq 2^{4k} \delta
    \]
    for $C$ sufficiently large, using our upper bound on~$\delta$. Combining the previous two statements we can get
    \begin{multline*}
         \{ f(x)^2 = f(x)\ \forall x, \quad f(x) f(y) = 0\ \forall \hamdist(x, y) = d, \quad \E[f] \geq Cn^{-\gamma/10} \}
         \quad \proves{4k} \quad -1 \geq 0,
    \end{multline*}
    as required.
\end{proof}

\section{Conclusions}   \label{sec:conclusions}
We describe here a few questions left open by our work.  
Regarding reverse hypercontractivity, it seems we may not have given the Book proof of the SOS Two-Point Inequality. We would be happy to see a more elegant ``human proof'', but even more interesting would be a computer algebra technique that could automatically prove SOS-ness, symbolically for all~$k$.

An additional open question regarding the Frankl--R\"odl Theorem is whether the Benabbas--Hatami--Magen proof can be improved to work for~$\gamma$ as small as~$\sqrt{\frac{\log n}{n}}$.   However even if this is possible, the resulting SOS proof would (presumably) be of degree $\Omega(\sqrt{\frac{n}{\log n}}) = \Omega(\sqrt{\frac{\log N}{\log \log N}})$, slightly superconstant.  Could there be an $O(1)$-degree SOS of the Frankl--R\"odl Theorem with this setting of~$\gamma$, or should one try to prove an SOS lower bound?  An interesting toy version of this question is the following:  The vertex isoperimetric inequality for the hypercube immediately implies that if $A, B \subseteq \bn$ satisfy $\dist(A,B) \geq \sqrt{n \log n}$ then $\frac{|A|}{2^n}\frac{|B|}{2^n} = o_n(1)$.  Does this have an $O(1)$-degree SOS proof?

\subsection*{Acknowledgments}

The authors would like to thank Siavosh Benabbas and Hamed Hatami for the advance copy of~\cite{BHM12}. Proposition~\ref{prop:mathoverflow} was independently proven by Fedor Nazarov; we are very grateful to him for sharing his proof with us.   Thanks also to Rajsekar Manokaran, Toni Pitassi, and Doron Zeilberger for helpful discussions.

\bibliographystyle{alpha}
\bibliography{../bib/odonnell-bib}

\appendix

\section{Solution to the puzzle}
\begin{multline*}
{a}^{6}{b}^{6}+15 \left( a+b \right) ^{2}
 \left( 1+ba \right) ^{4}+10 \left( a+b
 \right) ^{4} \left( 1+ba \right) ^{2}+5 \left( {
a}^{3}b+{b}^{2}+{a}^{2}+a{b}^{3}
 \right) ^{2}\\+\ 35{a}^{4}{b}^{4}+ \left( {a}^{3}+{
b}^{3} \right) ^{2}+{\tfrac {17}{3}} \left( {a}^{2}b
+a{b}^{2} \right) ^{2}+35{a}^{2}{b}^{2}+{
\tfrac {148}{3}}{a}^{2}{b}^{4}+{\tfrac {148}{3}}{
a}^{4}{b}^{2}.
\end{multline*}

\end{document}